\newtheorem{theorem}{Theorem}
\newtheorem{lemma}{Lemma}
\theoremstyle{definition}
\newtheorem{remark}{Remark}
\newcommand{\N}{\mathbb{N}}
\newcommand{\R}{\mathbb{R}}
\newcommand{\F}{\mathcal{F}}
\begin{document}


\renewcommand{\baselinestretch}{1.2}


\markboth{\hfill{\footnotesize\rm M. Gonz\'alez, C. Minuesa, and I. del Puerto} \hfill}
{\hfill {\footnotesize\rm Minimum disparity estimation in CBP} \hfill}

\renewcommand{\thefootnote}{}
$\ $\par


\fontsize{10.95}{14pt plus.8pt minus .6pt}\selectfont
\vspace{0.8pc}
\centerline{\large\bf Minimum disparity estimation in controlled branching processes}
\vspace{2pt}
\vspace{.4cm}
\centerline{Miguel Gonz\'alez, Carmen Minuesa*\footnote{*Corresponding author. Phone 0034924289300. Ext. 86820} and In\'es del Puerto}
\vspace{.4cm}
\centerline{\it University of Extremadura}
\vspace{.55cm}
\fontsize{9}{11.5pt plus.8pt minus .6pt}\selectfont


\begin{quotation}
\noindent {\it Abstract:}
Minimum disparity estimation in controlled branching processes is dealt with by assuming that the offspring  law belongs to a general parametric family. Under some regularity conditions it is proved that the minimum disparity estimators proposed -based on the nonparametric maximum likelihood estimator of the offspring law when the entire family tree is observed- are consistent and asymptotic normally distributed. Moreover, it is discussed the robustness of the estimators proposed. Through a simulated example, focussing on the minimum Hellinger and negative exponential disparity estimators, it is shown that both are robust against outliers, being the negative exponential one also robust against inliers.
\par

\vspace{9pt}
\noindent {\it Key words and phrases:}
Branching process, Controlled process, Minimum disparity estimation, Robustness
\par
\end{quotation}\par

\def\thefigure{\arabic{figure}}
\def\thetable{\arabic{table}}

\fontsize{10.95}{14pt plus.8pt minus .6pt}\selectfont

\setcounter{chapter}{1}
\setcounter{equation}{0} 
\noindent {\bf 1. Introduction}

Branching processes are useful models for the description of the dynamics of systems whose elements produce new ones following  probability laws. Its theory has been  developed from simple models to increasing realism. Added to the theoretical interest in these processes there is therefore a major practical dimension due to their potential applications in such diverse fields as biology, epidemiology, genetics, medicine, nuclear physics, demography, actuarial mathematics, algorithm and data structures, see, for example, the monographs \citet*{Devroye}, \citet*{Haccou}, \citet*{procedings-workshop} and \citet*{Kimmel}.

In particular, controlled branching processes  (CBPs) are discrete time sto\-chas\-tic processes very appropriate to describe the growth of populations in which the number of participating individuals in the reproduction process is determined in each generation by a control mechanism. Besides, as is common in the branching framework, every individual reproduces independently of the others following the same probability law, which is called the offspring distribution.

The versatility of these models makes possible to model different kind of migratory movements. Thus, several well-known branching processes can be included in this class as particular cases, for instance, the own Bienaym\'e--Galton--Watson process, the branching processes with immigration (see \citet*{Sriram-94} and references therein), with random migration (see \citet*{yanevyanev}), with immigration at state zero (see \citet*{art-Bruss}) or with bounded emigration (see \citet*{libro-ejem}). Other interesting particular cases are branching processes with adaptive control (see \citet*{Bercu}) or with continuous state space (see \citet*{ra}).

Since the appearance of the pioneering publication by  \citet*{Yanev-75}, its probability theory has been widely studied. The development of its inference theory, which guarantees the applicability of these models, has become  the main goal in the most recent researches. For this issue, in a frequentist framework, it is important to mention \citet*{de}, \citet*{Mohan-2000}, \citep*{art-2004b,art-2005b}, \citet*{Sriram}, and \citet*{art-EM}. From a Bayesian standpoint, one can find the papers \citet*{mmp} and \citet*{art-ABC}. It is well-known that, in general, the estimators based on maximizing the likelihood function are badly affected by outliers. This also happens in CBP context, as is pointed out in the simulated example at the end of the paper, reason why robust procedures must be developed.

Robust estimation has barely developed in the context of branching processes. One only can find  results in the frame of the Bienaym\'e--Galton--Watson processes, by using weighted least trimmed estimation (see \citet*{asy}) and by considering minimum Hellinger distance estimation (see \citet*{Sriram-2000}). Our main aim in this paper is to carry  out, in a frequentist framework, robust estimation for the general class of CBPs. To this end, we make use of the minimum disparity methodology. This methodology  has  arisen as one that attains robustness properties without loss of efficiency.  It was introduced in \citet*{lindsay} for discrete models and since then, the literature on it has experimented a large growth (see \citet*{Pardo-2006} and \citet*{Basu-2011} for further information). In our context, assuming that the offspring distribution belongs to a very general parametric family, we  determine minimum disparity estimators (MDEs) of the underlying parameter and study their asymptotic and robustness properties. The method consists of proposing as estimator of the offspring distribution that parametric distribution which minimizes the discrepancy with a nonparametric estimator based on some observed sample.  The discrepancy is measured by a function called disparity measure. Thus, one can obtain different MDEs depending on the nonparametric estimator and the disparity measure considered. Special interest is highlighted in this paper for the negative exponential disparity and the Hellinger distance. The maximum likelihood estimator based on the observation of the whole family tree until a certain generation is considered as the nonparametric estimator. This paper presents for the first time the application of the technique of minimum disparity for the general class of branching structure given by CBPs, hence extending the results in \citet*{Sriram-2000} in a double sense: model and measure, and moreover extending the results in   \citet*{Basu-MNEDE} and \citet*{Park-Basu-2004} from an independent and identically distributed (i.i.d.) and continuous context to a dependent and discrete setup. It is worthwhile to point out the fundamental roles played by the nonparametric estimator and the dependence structure of the CBP to obtain the asymptotic properties of the MDEs proposed, that require a different approach from those already established in the i.i.d. setting.

Besides the introduction, this paper is organized into 6 sections and an appendix. In Section  2 
 we present the formal model and establish some hypotheses that we assume throughout the paper. Section 3 
  is devoted to defining and describing minimum disparity estimation.  The asymptotic properties of MDEs are also studied; to this end, we introduce the disparity functional associated to a disparity measure and research its properties. The robustness of MDEs is studied in Section 4. 
   To illustrate this methodology, we present a simulated example in Section 5. 
    Concluding remarks about the contributions of the paper are presented in Section 6. 
    Finally, we dedicate an appendix  to the proofs of the theorems, in order to facilitate the reading of the paper.
\par

\setcounter{chapter}{2}
\setcounter{equation}{0} 
\noindent {\bf 2. The probability model}
\par
\label{sec:model}
We  consider a \emph{controlled branching process with random control function} (CBP). Mathematically, this process is a discrete--time stochastic model $\{Z_n\}_{n\in\N}$ defined recursively as:
\begin{equation}\label{def:model}
Z_0=N,\quad Z_{n+1}=\sum_{j=1}^{\phi_n(Z_{n})}X_{nj},\quad n=0,1,\ldots,
\end{equation}
being $N$ a nonnegative integer, $\{X_{nj}:\ n=0,1,\ldots;j=1,2,\ldots\}$ and $\{\phi_n(k):n,k=0,1,\ldots\}$ two independent families of nonnegative integer valued random variables. Moreover, $X_{nj}$, $n=0,1,\ldots$, $j=1,2,\ldots$, are i.i.d. random variables and for each $n=0,1,\ldots$, $\{\phi_n(k)\}_{k\geq 0}$, are independent stochastic processes with equal one--dimensional probability distributions. The empty sum in \eqref{def:model} is considered to be 0. We denote by $p=\{p_k\}_{k\geq 0}$ the common probability distribution of the random variables $X_{nj}$, i.e., $p_k=P[X_{nj}=k]$, $k\geq 0$, which is known as offspring distribution or reproduction law, and by $m$ and $\sigma^2$ its mean and variance (assumed finite), and we referred to them as offspring mean and variance, respectively. We also denote $\varepsilon(k)=E[\phi_0(k)]$ and $\sigma^2(k)=Var[\phi_0(k)]$ the mean and the variance of the control variables (assumed finite too).

In addition, we suppose that the offspring distribution belongs to a general parametric family:
\begin{equation}\label{family}
\mathcal{F}_\Theta=\{p(\theta): \theta\in\Theta\},
\end{equation}
where $p(\theta)=\{p_k(\theta)\}_{k\geq 0}$ and $\Theta$ is a subset of $\R$, that is, $p=p(\theta_0)$ for some $\theta_0\in\Theta$, referred as to the offspring parameter. It is the aim of this paper to estimate $\theta_0$ efficiently and robustly by choosing $\theta \in \Theta$ which provides the best adjustment to the observed sample in terms of the disparity measures.

To develop this methodology we need to consider nonparametric estimators of the offspring distribution. In this sense, in \citet*{art-EM}, nonparametric maximum likelihood estimators (MLEs) based on different samples are provided. Let denote a generic nonparametric estimator of $p$ based on a sample, say $\mathcal{X}_n$,  by $\tilde{p}_{n}=\{\tilde{p}_{n,k}\}_{k\geq 0}$, verifying $\tilde{p}_{n,k}\geq 0$, for each $k\geq 0$, and $\sum_{k=0}^\infty \tilde{p}_{n,k}=1$ (where $n$ indicates that we observe the data up to the generation $n$).

\setcounter{chapter}{3}
\setcounter{equation}{0} 
\noindent {\bf 3. Minimum disparity estimation}
\par
\label{sec:MDE}

In this section, we introduce the notions of disparity measure and minimum disparity estimator, and present several interesting examples of them. Although we focuss our attention on probability distributions defined on the nonnegative integers, that is, those which can be offspring distributions, the definitions and results given in this section keep valid for whatever discrete model. Let $\Gamma$ be the set of all probability distributions defined on the nonnegative integers,  $\mathcal{F}_\Theta$ the parametric family introduced in (\ref{family}), and $G(\cdot)$ a three times differentiable and strictly convex function on $[-1,\infty)$ with $G(0)=0$. The \emph{disparity measure} $\rho$ corresponding to $G(\cdot)$ is defined for any $q\in\Gamma$ and $\theta\in\Theta$, as
\begin{equation*}\label{def:disparity-measure}
    \rho(q,\theta)=\sum_{k=0}^\infty G(\delta(q,\theta,k))p_k(\theta),
\end{equation*}
where $\delta(q,\theta,k)$ denotes the \emph{``Pearson residual at $k$''}, that is,
\begin{equation*}\label{Pearson-res}
    \delta(q,\theta,k)=\frac{q_k}{p_k(\theta)}-1.
\end{equation*}
Notice that the Pearson residual at $k$ depends on the probability distribution $q$ and on the parameter $\theta$, and that $\delta(q,\theta,k)\in [-1,\infty)$, for each $q\in\Gamma$, $\theta\in\Theta$, and $k\geq 0$.

Due to the fact that $G(\cdot)$ is strictly convex, one has that $\rho$ is nonnegative. Moreover, when $G(\cdot)$ is also nonnegative and has a unique zero at $0$ it is verified that $\rho(q,\theta)=0$ if and only if $q=p(\theta)$. Given a sample $\mathcal{X}_n$ and a nonparametric estimator of $p$, $\tilde{p}_{n}$, based on it, we define the \emph{minimum disparity estimator (MDE) of $\theta_0$} for the disparity measure $\rho$ based on $\tilde{p}_{n}$ as
\begin{equation}\label{def:MDE}
    \tilde{\theta}^\rho_n(\tilde{p}_{n})=\arg\min_{\theta\in\Theta} \rho(\tilde{p}_{n},\theta).
\end{equation}

\begin{remark} Some interesting cases of nonnegative disparity measures are the following:

\noindent (a) The disparity obtained with the function $G(\delta)=(\delta+1)\log(\delta+1)$ is a kind of the Kullback--Leibler divergence. It is denoted by $LD(\tilde{p}_{n},\theta)$, for each $\tilde{p}_{n}$, $n\in\N$, and $\theta\in\Theta$, and it is known as \emph{likelihood disparity}. Its minimizer, $\tilde{\theta}^{LD}_n(\tilde{p}_{n})$, is known as the minimum likelihood disparity estimator (MLDE). In some cases, this estimator coincides with the MLE.

\noindent (b) The disparity determined by the function $G(\delta)=[(\delta+1)^{1/2}-1]^2$ is the \emph{squared Hellinger distance}, denoted by $HD(\tilde{p}_{n},\theta)$, for each $\tilde{p}_{n}$, $n\in\N$, and $\theta\in\Theta$. In this case, the MDE is the minimum Hellinger distance estimator (MHDE), denoted by $\tilde{\theta}^{HD}_n(\tilde{p}_{n})$.
   \item [(c)] The disparities defined by using either the function $G(\delta)=\exp(-\delta)-1$ or $G(\delta)=\exp(-\delta)-2$ (denoted by $D(\tilde{p}_{n},\theta)$ and $D_M(\tilde{p}_{n},\theta)$, respectively, for each $\tilde{p}_{n}$, $n\in\N$, and $\theta\in\Theta$) are  both known as \emph{negative exponential disparity} (notice both disparities differ only in a constant). The MDE is denoted by $\tilde{\theta}^{NED}_n(\tilde{p}_{n})$ and it is called the minimum negative exponential disparity estimator (MNEDE).
\end{remark}
Other examples are the \emph{family of power divergence measures} (see \citet*{cressie-84}), the \emph{blended chi-squared measures}, which include \emph{Pearson's chi--squared} and \emph{Neyman's chi-square}, the \emph{blended weight chi--squared measures} and the \emph{blended weight Hellinger distance family} (see \citet*{lindsay}).

Under conditions of differentiability of the model, a useful way for determining a MDE is to take into account that it must satisfy $\dot{\rho}(\tilde{p}_{n},\tilde{\theta}^\rho_n(\tilde{p}_{n}))=0$, with $\dot{\rho}(q,\theta)$ denoting the first derivative of $\rho(q,\theta)$ with respect to $\theta$, $q\in \Gamma$. It is verified, for $q\in \Gamma$ and $\theta\in \Theta$,
$$-\dot{\rho}(q,{\theta})=\sum_{k=0}^\infty p'_k(\theta)A(\delta(q,\theta,k)),$$
with $A(\delta)=(\delta+1)G'(\delta)-G(\delta)$, and $G'(\cdot)$ and $p'_k(\cdot)$ denoting the first derivative of $G(\cdot)$ and $p_k(\cdot)$, respectively. The function $A(\cdot)$ is called the \emph{residual adjustment function} (RAF) of the disparity and it is an increasing function on $[-1,\infty)$, and twice differentiable that can be redefined so that $A(0)=0$ and $A'(0)=1$.

\begin{remark}
\begin{enumerate}
  \item [(a)] The RAF of the likelihood disparity is $A(\delta)=\delta$.
  \item [(b)] The RAF of the squared Hellinger distance after the standardization is $A(\delta)=2[(\delta+1)^{1/2}-1]$.
  \item [(c)] The RAF of the negative exponential disparity corresponding to $G(\delta)=\exp(-\delta)-1$ is $A(\delta)=1-(2+\delta)\exp(-\delta)$ and to the function $G(\delta)=\exp(-\delta)-2$ is $A(\delta)=2-(2+\delta)\exp(-\delta)$.
\end{enumerate}
\end{remark}

In \citet*{lindsay}, the RAFs of different disparity measures are compared (see Figures 4 and 5 in \citet*{lindsay}). The RAF of a disparity measure is relevant in determining the efficiency and robustness properties of the corresponding MDE. Concretely, $A''(0)$ is demonstrated to play a key role: large negative values of $A''(0)$ correspond to robustness properties and zero value matches a second--order efficient estimator in the sense of \citet*{Rao}.


Before focussing on these matters, first we establish  the existence of the minimum in \eqref{def:MDE} and its uniqueness. To this end, we consider the \emph{disparity functional} associated with a disparity $\rho$, defined as $T^\rho: \Gamma \rightarrow \Theta$, with $T^\rho(q)=\arg\min_{\theta\in\Theta} \rho(q,\theta)$, whenever the minimum exists. Notice that there might exist multiple values of the parameter $\theta$ which minimize the function $\rho(q,\cdot)$. As a consequence, $T^\rho(q)$ would denote any of these values. Moreover, $\tilde{\theta}^\rho_n(\tilde{p}_{n})=T^\rho(\tilde{p}_{n})$.

It is easy to show that if the paramater space $\Theta$ is compact, then the disparity functional is well defined for each $q\in\Gamma$ such that the function $\rho(q,\cdot)$ is continuous in $\Theta$. However, we will weaken the compactness of $\Theta$  in a similar way as was done in \citet*{Simpson-1987} and in \citet*{Sriram-2000}. Specifically, given a disparity $\rho$, we limit our study to the subclass $\tilde{\Gamma}_\rho\subseteq\Gamma$ which verifies the following condition: there exists a compact set $C_\rho\subseteq\Theta$ such that for every $q\in\tilde{\Gamma}_\rho$,
 \begin{equation}\label{eq:cond-gamma-tilde}
 \inf_{\theta\in\Theta\backslash C_\rho} \rho(q,\theta) > \rho(q,\theta^*),
 \end{equation}
for some $\theta^*\in C_\rho$.

\begin{theorem}\label{thm:exist-uniq-T}
It is satisfied:
\begin{enumerate}
  \item [(i)] For each $q\in\tilde{\Gamma}_\rho$ verifying that $\rho(q,\cdot)$ is continuous in $C_\rho$, there exists $T^\rho(q)$.
  \item [(ii)] If $\rho$ is a disparity measure  and $\theta^*\in\Theta$ verifying $\inf_{\theta\in \Theta\backslash K}\rho(p(\theta^*),\theta)>0$ for some compact set $K\subseteq\Theta$ and $\rho(p(\theta^*),\cdot)$ is continuous in $K$, then $T^\rho(p(\theta^*))$ exists. Moreover, if $\F_\Theta$ is identifiable, that is, $p(\cdot)$ is injective, and the disparity $\rho$ can be redefined (without changing its minimizer) so that the related function $G(\cdot)$ is nonnegative and has a unique zero at 0, then $\theta^*=T^\rho(p(\theta^*))$.
\end{enumerate}
\end{theorem}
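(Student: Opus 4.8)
The plan is to treat the two parts in turn, reducing everything to the Weierstrass extreme value theorem together with the two defining strict inequalities. For (i), I would first use continuity of $\rho(q,\cdot)$ on the compact set $C_\rho$ to invoke the extreme value theorem, producing a point $\hat\theta\in C_\rho$ with $\rho(q,\hat\theta)=\min_{\theta\in C_\rho}\rho(q,\theta)\le \rho(q,\theta^*)$, where $\theta^*$ is the witness point from the definition of $\tilde\Gamma_\rho$. The defining inequality \eqref{eq:cond-gamma-tilde} then gives, for every $\theta\in\Theta\setminus C_\rho$, the bound $\rho(q,\theta)\ge \inf_{\vartheta\in\Theta\setminus C_\rho}\rho(q,\vartheta) > \rho(q,\theta^*)\ge \rho(q,\hat\theta)$, so no point outside $C_\rho$ can beat $\hat\theta$. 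Combined with the minimality of $\hat\theta$ on $C_\rho$, this shows $\hat\theta$ is a global minimizer of $\rho(q,\cdot)$ over $\Theta$, whence $T^\rho(q)=\hat\theta$ exists in the sense of \eqref{def:MDE}.

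For the existence claim in (ii), the key computation is that the Pearson residual $\delta(p(\theta^*),\theta^*,k)=p_k(\theta^*)/p_k(\theta^*)-1=0$ for all $k$, so using $G(0)=0$ one gets $\rho(p(\theta^*),\theta^*)=\sum_k G(0)p_k(\theta^*)=0$. I would then observe that the hypothesis $\inf_{\theta\in\Theta\setminus K}\rho(p(\theta^*),\theta)>0$ forces $\theta^*\in K$: otherwise $\theta^*\in\Theta\setminus K$ would make that infimum at most $\rho(p(\theta^*),\theta^*)=0$, a contradiction. Consequently $p(\theta^*)$ lies in the subclass $\tilde\Gamma_\rho$ with compact set taken to be $K$ and witness point $\theta^*\in K$, since $\inf_{\theta\in\Theta\setminus K}\rho(p(\theta^*),\theta)>0=\rho(p(\theta^*),\theta^*)$. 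As $\rho(p(\theta^*),\cdot)$ is assumed continuous on $K$, part (i) applies verbatim and yields the existence of $T^\rho(p(\theta^*))$.

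For the identification claim $\theta^*=T^\rho(p(\theta^*))$, I would pass to the redefined disparity whose generating function is nonnegative with a unique zero at $0$; this is legitimate because replacing $G(\delta)$ by $G(\delta)-G'(0)\delta$ leaves $\rho$ (hence its minimizer) unchanged, the linear correction contributing $-G'(0)\sum_k\delta(q,\theta,k)p_k(\theta)=-G'(0)\sum_k\big(q_k-p_k(\theta)\big)=0$, while strict convexity makes $0$ the unique minimum of the corrected function. For such a $G$ the characterization recorded after the definition of $\rho$ gives $\rho(p(\theta^*),\theta)=0$ if and only if $p(\theta)=p(\theta^*)$, which by identifiability (injectivity of $p(\cdot)$) holds if and only if $\theta=\theta^*$. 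Since $\rho\ge 0$ and vanishes only at $\theta=\theta^*$, that point is the unique global minimizer, so $T^\rho(p(\theta^*))=\theta^*$.

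The argument is essentially a packaging of compactness and the two strict inequalities, so I do not anticipate a serious obstacle. The only steps that need genuine care are the bookkeeping in (ii): verifying that $\theta^*$ necessarily lies in $K$ and that $p(\theta^*)\in\tilde\Gamma_\rho$, so that part (i) can legitimately be invoked, and checking that the linear redefinition of $G$ really preserves the minimizer (which rests on the identity $\sum_k\delta(q,\theta,k)p_k(\theta)=0$) so that the identification transfers back to the original disparity.
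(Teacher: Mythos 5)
Your proof is correct and follows essentially the same route as the paper's: for (i), the extreme value theorem on the compact set $C_\rho$ combined with the defining strict inequality \eqref{eq:cond-gamma-tilde}, and for (ii), the observation that $\rho(p(\theta^*),\theta^*)=0$ forces $\theta^*\in K$ (giving existence), followed by nonnegativity and the unique zero of $G(\cdot)$ plus identifiability to conclude $\theta^*=T^\rho(p(\theta^*))$. Your additional verification that the redefinition is always achievable via $G(\delta)-G'(0)\delta$, resting on $\sum_k \delta(q,\theta,k)p_k(\theta)=0$, is a harmless strengthening of what the paper simply assumes as a hypothesis (and records only for the negative exponential disparity in its Remark 3).
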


The proof is provided in Appendix.
\par

\begin{remark}\label{rem:continuity}
(i) Notice that the continuity of $p_k(\cdot)$ for each $k\geq 0$ in an arbitrary set $B\subseteq\Theta$ leads to the continuity in $B$ of the function $\rho(q,\cdot)$ associated with any disparity measure $\rho$ determined by a bounded function $G(\cdot)$, for each $q\in\Gamma$. This is deduced by a generalized dominated convergence theorem (see \citet*{Royden}, p.92). The aforesaid condition is satisfied by the negative exponential disparity. Although the Hellinger distance is defined by a nonbounded function $G(\cdot)$, in this case the condition of continuity of $p_k(\cdot)$ in $B$ for each $k\geq 0$ is enough to obtain the continuity of $HD(q,\cdot)$ in $B$ for each $q\in\Gamma$. This latter is followed by the Cauchy-Schwarz inequality and the Scheff\'e's theorem.

\noindent (ii) The redefinition of some disparities, without affecting their minimizer, so that the related functions $G(\cdot)$ will be nonnegative and have a unique zero at 0 is possible. For instance, for the negative exponential disparity we can consider the function $\bar{G}(\delta)=G(\delta)+\delta$ instead of $G(\delta)$, which verifies the previous properties and for each $q\in\Gamma$, $\sum_{k=0}^\infty\bar{G}(\delta(q,\theta,k)) p_k(\theta)=\sum_{k=0}^\infty G(\delta(q,\theta,k))p_k(\theta)$.
\end{remark}

\vspace*{3ex}
In order to study the asymptotic properties of the MDEs, we are to assume several conditions. Let fix the next assumptions:
\begin{itemize}
\item[(A1)] $\rho$  is a disparity measure associated with a function $G(\cdot)$ which verifies that $G(\cdot)$ and $G'(\cdot)$ are bounded in $[-1,\infty)$.
\item[(A2)] For every $p\in\mathcal{F}_\Theta,\  p_k(\cdot)$ is continuous in $C_{\rho}$ for each $k\geq 0$ (where $C_{\rho}$ is introduced in \eqref{eq:cond-gamma-tilde}).
\end{itemize}

Notice that under (A2), by Theorem \ref{thm:exist-uniq-T}, $T^{\rho}(q)$ exists for every $q\in\tilde{\Gamma}_{\rho}$. Let $\hat{\Gamma}_{\rho}$ be the set of $q\in \tilde{\Gamma}_{\rho}$ such that $T^{\rho}(q)$ is unique.
Now, in the following theorem the continuity of the disparity functional is established. Henceforth, all the limits are taken as $n\to\infty$.

\begin{theorem}\label{thm:contin-T}
Let $q$ and $\{q_n\}_{n\in\N}$ be in $\Gamma$ such that $q_n\to q$ in $l_1$. Assuming (A1), (A2) and that $q\in\hat{\Gamma}_{\rho}$, then $T^{\rho}(q_n)$ eventually exists and the functional $T^\rho(\cdot)$ is continuous in $q$, that is, $T^\rho(q_{n})\to T^\rho(q)$.
\end{theorem}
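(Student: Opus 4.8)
The plan is to reduce the whole statement to one quantitative estimate: that the disparity is Lipschitz in its first argument for the $l_1$ metric, \emph{uniformly in} $\theta\in\Theta$. Writing $\rho(q,\theta)=\sum_{k\ge 0}H(q_k,p_k(\theta))$ with $H(a,b)=b\,G(a/b-1)$ for $b>0$ (and $H(a,0):=0$, which is legitimate because (A1) makes $G$ bounded, so the indices $k$ with $p_k(\theta)=0$ contribute nothing to either term), one computes $\partial H/\partial a=G'(a/b-1)$, whence $|\partial H/\partial a|\le M:=\sup_{\delta\ge-1}|G'(\delta)|<\infty$ by (A1). Applying the mean value theorem coordinatewise at fixed $\theta$ (each $b_k=p_k(\theta)$ is then fixed and only $a$ moves from $q_k$ to $q_{n,k}$) and summing over $k$ gives
\[
\sup_{\theta\in\Theta}\bigl|\rho(q_n,\theta)-\rho(q,\theta)\bigr|\le M\,\|q_n-q\|_{l_1}\longrightarrow 0 .
\]
Establishing this uniform bound, and in particular dealing cleanly with the indices where $p_k(\theta)=0$, is where (A1) is genuinely used and is the main technical point; everything afterwards is the standard argmin machinery.

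Next I would fix $t:=T^\rho(q)$, which exists and is unique since $q\in\hat\Gamma_\rho$, and record two facts. First, $t\in C_\rho$: as $q\in\tilde\Gamma_\rho$ there is $\theta^*\in C_\rho$ with $\inf_{\theta\in\Theta\setminus C_\rho}\rho(q,\theta)>\rho(q,\theta^*)\ge\rho(q,t)$, so the global minimizer cannot lie outside $C_\rho$. Second, the gap $\Delta:=\inf_{\theta\in\Theta\setminus C_\rho}\rho(q,\theta)-\rho(q,t)$ is strictly positive. Feeding the uniform bound into these inequalities, for every $n$ with $M\|q_n-q\|_{l_1}<\Delta/2$ one obtains $\inf_{\theta\in\Theta\setminus C_\rho}\rho(q_n,\theta)\ge\inf_{\theta\in\Theta\setminus C_\rho}\rho(q,\theta)-M\|q_n-q\|_{l_1}>\rho(q,t)+M\|q_n-q\|_{l_1}\ge\rho(q_n,t)$. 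This is exactly the defining inequality of $\tilde\Gamma_\rho$ for $q_n$, with the \emph{same} compact set $C_\rho$ and witness $\theta^*=t$, so $q_n\in\tilde\Gamma_\rho$ for all large $n$. Since (A1), (A2) and Remark~\ref{rem:continuity} make $\rho(q_n,\cdot)$ continuous on $C_\rho$, Theorem~\ref{thm:exist-uniq-T}(i) then yields that $t_n:=T^\rho(q_n)$ eventually exists, and the same inequality forces $t_n\in C_\rho$.

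Finally I would prove convergence by the compactness/subsequence route. Because $\{t_n\}$ eventually lies in the compact set $C_\rho$, it suffices to show that every convergent subsequence has limit $t$. Let $t_{n_j}\to s\in C_\rho$. Starting from the optimality inequality $\rho(q_{n_j},t_{n_j})\le\rho(q_{n_j},t)$, I would pass to the limit using the uniform bound on both sides together with the continuity of $\rho(q,\cdot)$ on $C_\rho$ (which gives $\rho(q,t_{n_j})\to\rho(q,s)$ and $\rho(q,t)=\lim\rho(q_{n_j},t)$); this yields $\rho(q,s)\le\rho(q,t)$. By uniqueness of the minimizer of $\rho(q,\cdot)$ built into $\hat\Gamma_\rho$, this forces $s=t$. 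Hence $t_n\to t=T^\rho(q)$, which is the asserted continuity.

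In summary, the only nonroutine ingredient is the uniform-in-$\theta$ $l_1$-Lipschitz estimate on $\rho$; once it is available, eventual existence of $T^\rho(q_n)$ follows from the stability of the $\tilde\Gamma_\rho$ condition combined with Theorem~\ref{thm:exist-uniq-T}(i), and the convergence $T^\rho(q_n)\to T^\rho(q)$ follows from the compactness of $C_\rho$ and the uniqueness guaranteed by $\hat\Gamma_\rho$.
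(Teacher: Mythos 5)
Your proof is correct and follows essentially the same route as the paper's: the identical mean-value-theorem estimate $\sup_{\theta\in\Theta}|\rho(q_n,\theta)-\rho(q,\theta)|\le M\|q_n-q\|_{l_1}\to 0$ using the bound on $G'$ from (A1), then eventual membership of $q_n$ in $\tilde{\Gamma}_{\rho}$ combined with Theorem~\ref{thm:exist-uniq-T}(i) for existence, and finally compactness of $C_\rho$ plus uniqueness of $T^\rho(q)$ to force convergence. The only differences are presentational: where the paper argues by contradiction (both for $q_n\in\tilde{\Gamma}_{\rho}$ and for $T^\rho(q_n)\to T^\rho(q)$), you give direct quantitative versions of the same steps, which if anything state more explicitly what the paper leaves terse.
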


In the case of the Hellinger distance, despite condition (A1) is not satisfied by this disparity, one can establish the following result.

\begin{theorem}\label{thm:contin-T-HD}
Let $q$ and $\{q_n\}_{n\in\N}$ be in $\Gamma$ verifying  $||q_n^{1/2}-q^{1/2}||_2\to 0$ (where $||\cdot||_2$ denotes the $l_2$-norm defined on $\Gamma$ and for each $q\in\Gamma$, $q^{1/2}=\{q_k^{1/2}\}_{k\geq 0}$). If (A2) holds and $q\in\hat{\Gamma}_{HD}$, then $T^{HD}(q_n)$ eventually exists and the functional $T^{HD}(\cdot)$ is continuous in $q$, that is, $T^{HD}(q_n)\to T^{HD}(q)$.
\end{theorem}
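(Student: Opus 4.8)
The plan is to follow the same scheme as the proof of Theorem~\ref{thm:contin-T}, replacing the uniform-convergence step (which there rests on the boundedness hypothesis (A1)) by an argument exploiting the special geometric structure of the Hellinger distance. The key observation is the elementary identity
\begin{equation*}
HD(q,\theta)=\sum_{k=0}^\infty\Bigl[\bigl(q_k/p_k(\theta)\bigr)^{1/2}-1\Bigr]^2 p_k(\theta)=\sum_{k=0}^\infty\bigl(q_k^{1/2}-p_k(\theta)^{1/2}\bigr)^2=\bigl\|q^{1/2}-p(\theta)^{1/2}\bigr\|_2^2,
\end{equation*}
valid for every $q\in\Gamma$ and $\theta\in\Theta$. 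Thus $HD(q,\theta)^{1/2}$ is exactly the $l_2$-distance from $q^{1/2}$ to $p(\theta)^{1/2}$, and the reverse triangle inequality gives the $\theta$-free bound
\begin{equation*}
\sup_{\theta\in\Theta}\bigl|HD(q_n,\theta)^{1/2}-HD(q,\theta)^{1/2}\bigr|\le\bigl\|q_n^{1/2}-q^{1/2}\bigr\|_2=:\varepsilon_n\to 0.
\end{equation*}
This uniform convergence of the square roots over all of $\Theta$ is the substitute for the role played by (A1) in Theorem~\ref{thm:contin-T}.

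First I would establish the eventual existence of $T^{HD}(q_n)$. Since $q\in\hat{\Gamma}_{HD}\subseteq\tilde{\Gamma}_{HD}$, there are a compact set $C_{HD}\subseteq\Theta$ and a $\theta^*\in C_{HD}$ for which \eqref{eq:cond-gamma-tilde} holds; by Theorem~\ref{thm:exist-uniq-T}(i) the unique minimizer $\theta_0:=T^{HD}(q)$ exists and lies in $C_{HD}$. Writing $m:=HD(q,\theta_0)^{1/2}$ and $d:=\inf_{\theta\in\Theta\setminus C_{HD}}HD(q,\theta)^{1/2}$, condition \eqref{eq:cond-gamma-tilde} yields $d^2>HD(q,\theta^*)\ge m^2$, hence $m<d$. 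By the uniform bound, for $\theta\notin C_{HD}$ one has $HD(q_n,\theta)^{1/2}\ge d-\varepsilon_n$, while $HD(q_n,\theta_0)^{1/2}\le m+\varepsilon_n$; so as soon as $2\varepsilon_n<d-m$ we get $\inf_{\theta\in\Theta\setminus C_{HD}}HD(q_n,\theta)>HD(q_n,\theta_0)$, i.e.\ $q_n\in\tilde{\Gamma}_{HD}$ with the \emph{same} compact set $C_{HD}$. Because (A2) gives continuity of each $p_k(\cdot)$ on $C_{HD}$, Remark~\ref{rem:continuity}(i) ensures that $HD(q_n,\cdot)$ and $HD(q,\cdot)$ are continuous on $C_{HD}$, so Theorem~\ref{thm:exist-uniq-T}(i) applies and $T^{HD}(q_n)\in C_{HD}$ exists for all large $n$.

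Then I would prove $T^{HD}(q_n)\to\theta_0$ by the standard compactness-plus-uniqueness argument. As $\{T^{HD}(q_n)\}$ eventually lies in the compact set $C_{HD}$, it suffices by Bolzano--Weierstrass to show that every convergent subsequence has limit $\theta_0$. Let $T^{HD}(q_{n_j})\to\theta_\infty\in C_{HD}$. Combining the uniform bound with the continuity of $HD(q,\cdot)$ on $C_{HD}$ gives
\begin{equation*}
\bigl|HD(q_{n_j},T^{HD}(q_{n_j}))^{1/2}-HD(q,\theta_\infty)^{1/2}\bigr|\le\varepsilon_{n_j}+\bigl|HD(q,T^{HD}(q_{n_j}))^{1/2}-HD(q,\theta_\infty)^{1/2}\bigr|\longrightarrow 0,
\end{equation*}
while the minimality of $T^{HD}(q_{n_j})$ together with the uniform bound gives $HD(q_{n_j},T^{HD}(q_{n_j}))^{1/2}\le HD(q_{n_j},\theta_0)^{1/2}\le m+\varepsilon_{n_j}$. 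Passing to the limit yields $HD(q,\theta_\infty)\le HD(q,\theta_0)$, and since $q\in\hat{\Gamma}_{HD}$ makes $\theta_0$ the \emph{unique} minimizer of $HD(q,\cdot)$, we conclude $\theta_\infty=\theta_0$, whence the whole sequence converges.

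I expect the only delicate point to be the justification of the uniform convergence of the square roots, namely recognizing that the Hellinger disparity is a genuine squared $l_2$-distance between the root densities, so that the hypothesis $\|q_n^{1/2}-q^{1/2}\|_2\to 0$ translates directly, via the reverse triangle inequality, into a bound independent of $\theta$. Once this replacement for (A1) is in hand, the existence and argmin-continuity steps run exactly as in Theorem~\ref{thm:contin-T}, the continuity of $HD(q,\cdot)$ required there being supplied by (A2) through Remark~\ref{rem:continuity}(i).
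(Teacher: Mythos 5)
Your proof is correct and takes essentially the same approach as the paper: the paper's own proof consists precisely of observing that, since $HD(q,\theta)^{1/2}=\|q^{1/2}-p(\theta)^{1/2}\|_2$, the uniform bound $\sup_{t\in\Theta}\bigl|HD(q_n,t)^{1/2}-HD(q,t)^{1/2}\bigr|\le\|q_n^{1/2}-q^{1/2}\|_2$ replaces the (A1)-based uniform convergence step, after which the argument of Theorem~\ref{thm:contin-T} (eventual membership in $\tilde{\Gamma}_{HD}$, existence via Theorem~\ref{thm:exist-uniq-T}, and the subsequence-plus-uniqueness argument using the continuity of $HD(q,\cdot)$ supplied by (A2) and Remark~\ref{rem:continuity}(i)) runs unchanged. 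Your write-up simply makes explicit the details the paper leaves as ``analogous.''
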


The proofs of Theorems \ref{thm:contin-T} and \ref{thm:contin-T-HD} are given in Appendix.

\vspace*{3ex}

Recall that $p=p(\theta_0)$ is the true reproduction law. Observe that under (A1), if (A2) is verified and $p\in\hat{\Gamma}_{\rho}$, one obtains $T^{\rho}(p)=\theta_0$ and for the case of Hellinger distance, dropping (A1), one also has $T^{HD}(p)=\theta_0$. Next theorem establishes the strong consistency of the MDEs.

\begin{theorem}\label{thm:consistency-MDE}
Assume (A2)  and $p\in\hat{\Gamma}_{\rho}$, for the corresponding disparity $\rho$. Then under conditions which guarantee that $\tilde{p}_{n,k}$ is a strongly consistent estimator of $p_k(\theta_0)$, for each $k\geq 0$, one has that:
\begin{enumerate}
\item [(i)]  $\tilde{\theta}_n^{\rho}(\tilde{p}_{n})$ eventually exists, is a random variable and  $\tilde{\theta}_n^{\rho}(\tilde{p}_{n})\rightarrow \theta_0$ almost surely (a.s.) if (A1) holds.
\item [(ii)] $\tilde{\theta}_n^{HD}(\tilde{p}_{n})$ eventually exists, is a random variable and $\tilde{\theta}_n^{HD}(\tilde{p}_{n})\rightarrow \theta_0$ a.s.
\end{enumerate}
\end{theorem}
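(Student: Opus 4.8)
The plan is to reduce the whole statement to the continuity of the disparity functional already proved in Theorems~\ref{thm:contin-T} and~\ref{thm:contin-T-HD}, treating the strong consistency of $\tilde p_{n,k}$ as the only probabilistic input. The first task is to upgrade the coordinatewise a.s.\ convergence $\tilde p_{n,k}\to p_k(\theta_0)$ into convergence in the topology required by those theorems. Since the hypothesis provides, for each fixed $k$, an event of probability one on which $\tilde p_{n,k}(\omega)\to p_k(\theta_0)$, and there are only countably many coordinates, I would intersect these events to obtain a single set $\Omega_0$ with $P(\Omega_0)=1$ on which $\tilde p_{n,k}(\omega)\to p_k(\theta_0)$ holds simultaneously for every $k\ge 0$.

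Fix $\omega\in\Omega_0$. Because $\tilde p_n(\omega)$ and $p(\theta_0)$ are probability mass functions on $\N$ and the former converges to the latter coordinatewise, Scheff\'e's theorem yields $\|\tilde p_n(\omega)-p(\theta_0)\|_1\to 0$. For part~(i) this is exactly the hypothesis of Theorem~\ref{thm:contin-T}: with $q_n=\tilde p_n(\omega)$, $q=p(\theta_0)\in\hat\Gamma_\rho$, and (A1), (A2) in force, that theorem gives that $T^\rho(\tilde p_n(\omega))$ eventually exists and $T^\rho(\tilde p_n(\omega))\to T^\rho(p(\theta_0))=\theta_0$, the last identity being the one recorded just before the statement. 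As this holds for every $\omega\in\Omega_0$ and $P(\Omega_0)=1$, the eventual existence and the a.s.\ convergence $\tilde\theta_n^\rho(\tilde p_n)\to\theta_0$ follow at once.

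For part~(ii) I would pass from $l_1$ convergence to convergence of the square roots in $l_2$, which is the hypothesis of Theorem~\ref{thm:contin-T-HD}; this route avoids invoking (A1), which the Hellinger disparity does not satisfy. The elementary inequality $(a^{1/2}-b^{1/2})^2\le |a-b|$ for $a,b\ge 0$ gives $\|\tilde p_n(\omega)^{1/2}-p(\theta_0)^{1/2}\|_2^2\le\|\tilde p_n(\omega)-p(\theta_0)\|_1\to 0$ on $\Omega_0$. Applying Theorem~\ref{thm:contin-T-HD} with $q=p(\theta_0)\in\hat\Gamma_{HD}$ then yields eventual existence of $T^{HD}(\tilde p_n(\omega))$ together with $T^{HD}(\tilde p_n(\omega))\to T^{HD}(p(\theta_0))=\theta_0$, hence the claimed a.s.\ convergence.

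It remains to justify that $\tilde\theta_n^\rho(\tilde p_n)$ is a genuine random variable, and this is where I expect the only real care to be needed, since for finite $n$ the minimizer need not be unique and the convergence itself is now immediate. Here I would argue that, $\tilde p_n$ being a measurable function of the sample $\mathcal X_n$, the map $(\omega,\theta)\mapsto\rho(\tilde p_n(\omega),\theta)$ is measurable in $\omega$ and, by Remark~\ref{rem:continuity} under (A2), continuous in $\theta$; on the event where the minimum exists it is attained on the compact set $C_\rho$ thanks to \eqref{eq:cond-gamma-tilde}. A measurable-maximum (measurable selection) theorem then furnishes a measurable selector of the argmin over $C_\rho$, which we adopt as $\tilde\theta_n^\rho(\tilde p_n)$ (defined arbitrarily but measurably off that event). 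Thus the measurable-selection step and the Scheff\'e upgrade from pointwise to $l_1$ convergence are the two delicate points; once the correct mode of convergence is secured, consistency is a direct corollary of the continuity theorems.
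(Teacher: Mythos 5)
Your proposal is correct and follows essentially the same route as the paper: the paper upgrades the coordinatewise a.s.\ convergence to a.s.\ $l_1$ convergence by citing Glick's theorem (which is precisely your Scheff\'e-plus-countable-intersection argument), then applies Theorem~\ref{thm:contin-T} for part~(i), uses the same inequality $\|\tilde p_n^{1/2}-p^{1/2}\|_2^2\leq\|\tilde p_n-p\|_1$ together with Theorem~\ref{thm:contin-T-HD} for part~(ii), and settles measurability by the measurable-selection result of Brown and Purves, which is exactly the measurable-argmin device you invoke.
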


The proof can be consulted in Appendix.\par

\setcounter{chapter}{4}
\setcounter{equation}{0} 
\noindent {\bf 4. Asymptotic efficiency}
\par
\label{sec:asymp-prop-MDE}
The results given in the previous section are general in the sense that the explicit expression of the nonparametric estimator is not required, and one only needs to know its properties, as for example, its consistency. However, to establish  the asymptotic efficiency of the MDEs, explicit formulas of the nonparametric estimators are needed. Then, to develop this section we come back to the CBP context.
In \citet*{art-EM}, we give nonparametric estimators of the offspring distribution under several sampling schemes. In particular, in a complete data context, we consider the entire family tree up to generation $n$ can be observed, that is, 
the sample $\mathcal{Z}_n^*=\{Z_l(k): 0\leq l\leq n-1; k\geq 0\}$, where $Z_l(k)=\sum_{i=1}^{\phi_l(Z_l)}I_{\{X_{li}=k\}}$, $0\leq l\leq n-1$, $k\geq 0$, with $I_B$ standing for the indicator function of the set $B$. Recall  that in a general setting $p=\{p_k\}_{k\geq 0}$ is the offspring distribution. The MLE of $p_k$, for each $k\geq 0$, (see \citet*{art-EM}),  is given by $\widehat{p}_n=\{\widehat{p}_{n,k}\}_{k\geq 0}$:
\begin{equation} \label{def:MLE-pk}
\widehat{p}_{n,k}=\frac{Y_{n-1}(k)}{\Delta_{n-1}},\qquad k\geq 0,\\
\end{equation}
where  $\Delta_l=\sum_{j=0}^l \phi_j(Z_j)$, and $Y_{l}(k)=\sum_{j=0}^{l} Z_j(k)$, $k\geq 0$, $0\leq l\leq n-1$.

It is proved that  $\widehat{p}_{n,k}$ is strongly consistent for $p_k$ on $\{Z_n\to\infty\}$, for each $k\geq 0$, (see Theorem 3.6 in \citet*{art-EM}), under the following assumption:

(A3) Let consider a CBP satisfying that:
\begin{itemize}
\item[(a)] There exists $\tau=\lim_{k\to\infty} \varepsilon(k)k^{-1}<\infty$ and the sequence $\{\sigma^2(k)k^{-1}\}_{k\geq 1}$ is bounded.
\item[(b)] $\tau_m=\tau m >1$ and $Z_0$ is large enough such that  $P[Z_n\rightarrow\infty]>0$.
\item[(c)] $\{Z_n(\tau_m)^{-n}\}_{n\in\N}$ converges a.s. to a finite random variable  $W$ such that $P[W>0]>0$.
\item[(d)] $\{W > 0\}=\{Z_n\to\infty\}$  a.s.
\end{itemize}

\begin{remark}
For CBPs verifying (a), sufficient conditions for (b), (c) and (d) hold are discussed in \citet*{art-EM}.
\end{remark}
As a consequence, under (A3), from Theorem \ref{thm:consistency-MDE}~\emph{(i)} and \emph{(ii)}, one obtains, respectively, that the estimators $\tilde{\theta}_n^{\rho}(\hat{p}_n)$ and $\tilde{\theta}_n^{HD}(\hat{p}_n)$ are strongly consistent on  $\{Z_n\to\infty\}$.

Now, we focus our attention on the asymptotic efficiency. To this end, we must consider additional conditions on the functions $p(\cdot)$. We assume from now on that  for each $k\geq 0$, $p_{k}(\theta)$ is twice continuously differentiable with respect to $\theta$ and:

(A4) For $\theta\in\Theta, \epsilon>0$ and for each $\theta^*\in(\theta-\epsilon,\theta+\epsilon)$
\begin{itemize}
\item[(a)] $ |p'_k(\theta^*) |< J_k(\theta),\ \forall k\geq 0,\quad  \sum_{k=0}^\infty J_k(\theta)<\infty$,
\item[(b)] $ |p''_k(\theta^*) |< L_k(\theta), \ \forall k\geq 0,\quad  \sum_{k=0}^\infty L_k(\theta)<\infty$,
\item[(c)]$ |u(\theta^*,k)^2 p_k(\theta^*) |< M_k(\theta), \ \forall k\geq 0,\quad\hspace*{-1ex}  \sum_{k=0}^\infty M_k(\theta)<\infty$, where $u(\theta,k)=(\log p_k(\theta))' = p'_k(\theta)/p_k(\theta)$.
\end{itemize}
(A5) $\rho$  is a disparity measure with  RAF $A(\cdot)$ verifying that $A(\delta),\ A'(\delta),\ A'(\delta)(1+\delta)$ and $A''(\delta)(1+\delta)$ are bounded functions on $\delta\in[-1,\infty)$.
\begin{remark}
Notice that for a disparity $\rho$ satisfying (A5), (A4) is a sufficient condition to guarantee that $\rho(q,\theta)$ can be twice differentiable with respect to $\theta$.
\end{remark}

It is easy to check that the negative exponential disparity satisfies (A5) but the Hellinger distance does not. Thus, to establish  the efficiency of MHDE instead of the previous hypotheses we will assume the following condition on $s(\theta)=p(\theta)^{1/2}$ (as usual, $s(\theta)=\{s_k(\theta)\}_{k\geq 0}$) in a similar way to that in \citet*{Beran-77}:

(A6) For $\theta\in int(\Theta)$ (that is, $\theta$ in the interior of $\Theta$), $s(\theta)$ is twice differentiable in $l_2$, that is, there exist $s'(\theta)\in l_2$ and $s''(\theta)\in l_2$ verifying that for every $\beta$ in a neighbourhood of zero
$$s_k(\theta+\beta) =  s_k(\theta)+\beta s_k'(\theta) +  \beta v_k(\beta),$$ $$ s_k'(\theta+\beta) = s_k'(\theta)+\beta s_k''(\theta) +  \beta w_k(\beta),$$ where $\sum_{k=0}^\infty v_k(\beta)^2\to 0\ \text{ and }\ \sum_{k=0}^\infty w_k(\beta)^2\to 0,\ \text{ as }\beta\to 0$.

Note that since the Fisher information is $I(\theta_0)=\sum_{k=0}^\infty u(\theta_0,k)^2 p_k(\theta_0)=4||s'(\theta_0)||_2^{2}$, either from (A4)~\emph{(c)} or from $s'(\theta_0)\in l_2$, $I(\theta_0)<\infty$ is obtained. In addition, observe that although conditions (A1) and (A5) seem to be quite restrictive, they are satisfied by a wide set of disparities (see \citet*{Park-Basu-2004}).

\begin{theorem}\label{thm:asymptotic-normality}
Let be a CBP satisfying (A3), with $p=p(\theta_0)$ its offspring distribution. Moreover, assume (A2) and $p\in\hat{\Gamma}_\rho$ (recall that in this case $T^{\rho}(p)=\theta_0$).
\begin{enumerate}
\item [(i)] If (A1), (A4), and (A5) hold, $s'(\theta_0)\in l_1$, and supposing that any sequence of estimators $\{\varphi_n\}_{n\in\N}$ converging to $\theta_0$ in probability satisfies
  \vspace*{-0.5ex}\begin{eqnarray}
  \sum_{k=0}^\infty |p''_k(\varphi_n)-p''_k(\theta_0)|&\xrightarrow{P}& 0,\label{eq:cond-conver-prob1}\\
  \sum_{k=0}^\infty |u(\varphi_n,k)^2 p_k(\varphi_n)-u(\theta_0,k)^2 p_k(\theta_0)|&\xrightarrow{P}&0,\label{eq:cond-conver-prob2}
  \end{eqnarray}\vspace*{-0.5ex}
then, it is verified:
\begin{equation}\label{eq:efficiency}
\Delta_{n-1}^{1/2}(\tilde{\theta}_n^\rho(\hat{p}_n)-\theta_0)\xrightarrow{d} N\left(0,I(\theta_0)^{-1}\right),
\end{equation}
where $\stackrel{P}{\rightarrow}$ denotes the convergence in probability and $\stackrel{d}{\rightarrow}$ represents the convergence in distribution with respect to the probability $P[\cdot| Z_n\to\infty]$.
\item[(ii)] For the Hellinger distance, \eqref{eq:efficiency} also holds under the assumptions (A6), $\sum_{k=0}^\infty s_k''(\theta_0)p_k^{1/2}<0$ and $\theta_0\in int(\Theta)$.
\end{enumerate}
\end{theorem}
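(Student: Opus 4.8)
The plan is to derive the limit law from the estimating equation satisfied by the MDE and to analyse its numerator and denominator separately, the first by a martingale central limit theorem and the second by a Slutsky-type consistency argument, as the hypotheses suggest. Since $\tilde\theta_n^\rho(\hat p_n)$ minimises $\rho(\hat p_n,\cdot)$ and, under (A4)--(A5), $\rho(\hat p_n,\cdot)$ is twice differentiable, it solves $\dot\rho(\hat p_n,\tilde\theta_n^\rho)=0$. Writing $-\dot\rho(q,\theta)=\sum_k p_k'(\theta)A(\delta(q,\theta,k))$ and applying the mean value theorem to $-\dot\rho(\hat p_n,\cdot)$ about $\theta_0$, I obtain $\tilde\theta_n^\rho-\theta_0=-\dot\rho(\hat p_n,\theta_0)/\ddot\rho(\hat p_n,\theta_n^*)$ for some $\theta_n^*$ between $\tilde\theta_n^\rho$ and $\theta_0$, with $\theta_n^*\to\theta_0$ a.s. by Theorem \ref{thm:consistency-MDE}. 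It then suffices to prove that $\Delta_{n-1}^{1/2}(-\dot\rho(\hat p_n,\theta_0))\xrightarrow{d}N(0,I(\theta_0))$ and $\ddot\rho(\hat p_n,\theta_n^*)\xrightarrow{P}I(\theta_0)$, and to conclude by Slutsky's theorem that $\Delta_{n-1}^{1/2}(\tilde\theta_n^\rho-\theta_0)\xrightarrow{d}N(0,I(\theta_0)^{-1})$.

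For the numerator I would expand the RAF using the standardisation $A(0)=0$, $A'(0)=1$, writing $A(\delta)=\delta+\tfrac12 A''(\delta_k^*)\delta^2$. The linear part gives $\sum_k p_k'(\theta_0)\delta(\hat p_n,\theta_0,k)$, which, since $\sum_k p_k'(\theta_0)=0$ (differentiating $\sum_k p_k(\theta)\equiv1$, legitimate under (A4)(a)), equals $\sum_k u(\theta_0,k)\hat p_{n,k}$. Using $\hat p_{n,k}=Y_{n-1}(k)/\Delta_{n-1}$ and $Z_j(k)=\sum_i I_{\{X_{ji}=k\}}$, the scaled leading term becomes $\Delta_{n-1}^{-1/2}M_n$ with $M_n=\sum_{j=0}^{n-1}\sum_{i=1}^{\phi_j(Z_j)}u(\theta_0,X_{ji})$. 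The crux is to show $\Delta_{n-1}^{-1/2}M_n\xrightarrow{d}N(0,I(\theta_0))$ conditionally on $\{Z_n\to\infty\}$. For a filtration $\{\mathcal F_j\}$ carrying the generation-$j$ offspring and the control $\phi_j(Z_j)$, the increments $D_j=\sum_{i=1}^{\phi_j(Z_j)}u(\theta_0,X_{ji})$ are martingale differences, because $E[u(\theta_0,X_{ji})]=\sum_k p_k'(\theta_0)=0$ and the $X_{ji}$ are i.i.d.\ and independent of the controls; Wald's identity gives $E[D_j^2\mid\mathcal F_{j-1}]=\varepsilon(Z_j)I(\theta_0)$. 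A martingale CLT (Hall--Heyde) then requires the conditional Lindeberg condition and $\Delta_{n-1}^{-1}\sum_j E[D_j^2\mid\mathcal F_{j-1}]=I(\theta_0)\sum_j\varepsilon(Z_j)/\sum_j\phi_j(Z_j)\to I(\theta_0)$ a.s., which is exactly where (A3) enters: since $\varepsilon(k)/k\to\tau$, $\phi_j(Z_j)/Z_j\to\tau$ and $Z_j\to\infty$ geometrically on $\{Z_n\to\infty\}$, both sums are dominated by their last generations and their ratio tends to $1$. I expect this step, together with the conditional Lindeberg verification for the growing random sums $D_j$, to be the main obstacle, since it is precisely here that the dependence structure of the CBP departs from the i.i.d.\ theory.

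For the quadratic remainder and for the denominator I would use the fluctuation rate $\hat p_{n,k}-p_k(\theta_0)=O_P(\Delta_{n-1}^{-1/2})$ together with the boundedness in (A5). The term $\tfrac12\Delta_{n-1}^{1/2}\sum_k p_k'(\theta_0)A''(\delta_k^*)\delta(\hat p_n,\theta_0,k)^2$ is controlled by bounding $|A''(\delta)(1+\delta)|$ and $|p_k'(\theta_0)|\le J_k$, with $\sum_k J_k<\infty$ from (A4)(a), showing it is $o_P(1)$; this uniform-in-$k$ control of the infinite sum is the secondary technical difficulty. Differentiating twice gives $-\ddot\rho(q,\theta)=\sum_k p_k''(\theta)A(\delta)-\sum_k A'(\delta)(1+\delta)u(\theta,k)^2 p_k(\theta)$, and evaluating the limit along $q=\hat p_n\to p(\theta_0)$, $\theta_n^*\to\theta_0$, with $A(0)=0$, $A'(0)=1$, the two sums tend to $0$ and to $\sum_k u(\theta_0,k)^2 p_k(\theta_0)=I(\theta_0)$. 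Conditions \eqref{eq:cond-conver-prob1} and \eqref{eq:cond-conver-prob2} are exactly what is needed to pass these limits under the random argument $\theta_n^*$, yielding $\ddot\rho(\hat p_n,\theta_n^*)\xrightarrow{P}I(\theta_0)$ and hence \eqref{eq:efficiency}.

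For part (ii) the RAF of the Hellinger distance is unbounded, so I would replace the $l_1$/RAF route by the square-root reparametrisation $s(\theta)=p(\theta)^{1/2}$, with $HD(q,\theta)=\|q^{1/2}-s(\theta)\|_2^2$; the estimating equation is then $\langle \hat p_n^{1/2}-s(\tilde\theta_n^{HD}),s'(\tilde\theta_n^{HD})\rangle=0$, and (A6) supplies the $l_2$-differentiability for the Taylor expansion in $\theta$. Using $\langle s(\theta_0),s'(\theta_0)\rangle=0$, the leading term is $\langle \hat p_n^{1/2}-p^{1/2},s'(\theta_0)\rangle$; linearising $\hat p_{n,k}^{1/2}-p_k^{1/2}\approx\tfrac12 p_k^{-1/2}(\hat p_{n,k}-p_k)$ with $s_k'(\theta_0)=\tfrac12 u(\theta_0,k)p_k^{1/2}$ reduces it, up to the factor $\tfrac14$, to the same $\Delta_{n-1}^{-1/2}M_n$, so the scaled numerator converges to $N(0,\tfrac1{16}I(\theta_0))$. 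The derivative of the estimating function tends to $-\|s'(\theta_0)\|_2^2+\langle p^{1/2}-s(\theta_0),s''(\theta_0)\rangle=-\|s'(\theta_0)\|_2^2=\sum_k s_k''(\theta_0)p_k^{1/2}$, which the hypothesis $\sum_k s_k''(\theta_0)p_k^{1/2}<0$ guarantees to be the nonzero constant $-\tfrac14 I(\theta_0)$; dividing, the factors $\tfrac1{16}I(\theta_0)$ and $(\tfrac14 I(\theta_0))^2$ combine to give variance $I(\theta_0)^{-1}$, again by Slutsky's theorem.
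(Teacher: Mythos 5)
Your skeleton for part (i) coincides with the paper's: both pass to the estimating equation, expand $\dot{\rho}(\hat{p}_n,\cdot)$ around $\theta_0$, and reduce the claim to $\ddot{\rho}(\hat{p}_n,\theta_n^*)\xrightarrow{P}I(\theta_0)$ (via \eqref{eq:cond-conver-prob1}--\eqref{eq:cond-conver-prob2} and dominated convergence, exactly as you propose) plus asymptotic normality of $-\Delta_{n-1}^{1/2}\dot{\rho}(\hat{p}_n,\theta_0)$, whose linear part both you and the paper identify with $\Delta_{n-1}^{-1/2}\sum_{i=0}^{n-1}\sum_{j=1}^{\phi_i(Z_i)}u(\theta_0,X_{ij})$. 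For that linear part your route (Hall--Heyde martingale CLT with conditional variances $\varepsilon(Z_i)I(\theta_0)$) differs from the paper's, which notes that the double sum is \emph{equal in distribution} to the randomly indexed i.i.d.\ sum $\sum_{i=0}^{\Delta_{n-1}}u(\theta_0,X_{0i})$ and invokes the branching-process CLT of Dion (1974) together with $\Delta_{n-1}\tau_m^{-n}\to(\tau_m-1)^{-1}\tau W$ a.s. Be aware that your martingale route must confront random norming: with the deterministic norming $\tau_m^{n/2}$ the conditional variance converges to a limit proportional to the random variable $\tau W$, so a plain martingale CLT yields a mixed normal and an Anscombe/stable-convergence step is still required; this is precisely what the paper's Lemma \ref{lema:previous-asymptotic-normality} (used for part (ii)) carries out via truncation, characteristic functions and Billingsley's Theorem 25.5. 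You flag this as the main obstacle but do not resolve it.

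The genuine gap is your treatment of the nonlinear remainder $\Delta_{n-1}^{1/2}\sum_{k} p_k'(\theta_0)\left[A(\delta(\hat{p}_n,\theta_0,k))-\delta(\hat{p}_n,\theta_0,k)\right]$. Your proposed control, a second-order Taylor expansion $A(\delta)=\delta+\tfrac{1}{2}A''(\delta_k^*)\delta^2$ bounded through (A5) and $|p_k'(\theta_0)|\le J_k$, fails for two reasons. First, (A5) only gives $|A''(\delta_k^*)|\le C(1+\delta_k^*)^{-1}$, which is useless near $\delta=-1$; and for every finite $n$ infinitely many cells satisfy $\hat{p}_{n,k}=0$, i.e.\ $\delta(\hat{p}_n,\theta_0,k)=-1$, so the intermediate points $\delta_k^*$ do approach $-1$. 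Second, even where the bound holds, the termwise rate $\delta(\hat{p}_n,\theta_0,k)^2=O_P\bigl(\Delta_{n-1}^{-1}p_k(\theta_0)^{-1}\bigr)$ only yields $\Delta_{n-1}^{1/2}\sum_k J_k\delta^2=O_P\bigl(\Delta_{n-1}^{-1/2}\sum_k J_k/p_k(\theta_0)\bigr)$, and $\sum_k J_k/p_k(\theta_0)$ need not converge, so summability over $k$ is not obtained this way. The paper instead dominates the remainder by Lindsay's global inequality $|A(t^2-1)-(t^2-1)|\le B(t-1)^2$, obtaining $B\Delta_{n-1}^{1/2}\sum_k|u(\theta_0,k)|\bigl(\hat{p}_{n,k}^{1/2}-p_k(\theta_0)^{1/2}\bigr)^2$, rewrites $|u(\theta_0,k)|=2|s_k'(\theta_0)|/s_k(\theta_0)$, and proves this is $o_P(1)$ through a uniform-integrability argument with fourth-moment bounds $E[A_{n,k}^4]\le K_0|s_k'(\theta_0)|^2(1-p_k(\theta_0))$ followed by dominated convergence over $k$ --- which is exactly where the hypothesis $s'(\theta_0)\in l_1$ enters; your proof never uses that hypothesis, which signals the omission. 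The same defect recurs in your part (ii): the linearisation $\hat{p}_{n,k}^{1/2}-p_k^{1/2}\approx\tfrac{1}{2}p_k^{-1/2}(\hat{p}_{n,k}-p_k)$ carries the exact error $-\tfrac{1}{2}p_k^{-1/2}\bigl(\hat{p}_{n,k}^{1/2}-p_k^{1/2}\bigr)^2$, and showing $\Delta_{n-1}^{1/2}\sum_k s_k'(\theta_0)p_k^{-1/2}\bigl(\hat{p}_{n,k}^{1/2}-p_k^{1/2}\bigr)^2\xrightarrow{P}0$ is again the hard step, handled in the paper by the same $A_{n,k}$ machinery; your otherwise correct variance bookkeeping (the factors $\tfrac{1}{16}I(\theta_0)$ and $\bigl(\tfrac{1}{4}I(\theta_0)\bigr)^2$ combining into $I(\theta_0)^{-1}$) does not substitute for it. Incidentally, your direct estimating-equation route for (ii) differs from the paper's, which invokes the algebraic expansion of Theorem 3.3 of Sriram and Vidyashankar (2000) around the MHDE; with the missing quadratic-term control supplied, your variant would be equally workable.
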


The proof of the previous theorem is given in Appendix.

\begin{remark}
Besides the MDEs based on the whole family tree, one can determine the ones based on other samples. In \citet*{art-EM}, we also study the maximum likelihood estimation of the offspring distribution under incomplete sampling schemes, considering the random samples given by the number of individuals and progenitors in each generation, that is, $\overline{\mathcal{Z}}_n=\{Z_0,Z_{l+1},\phi_l(Z_l): l=0,\ldots,n-1\}$, and by only the generation sizes, that is, $\mathcal{Z}_n=\{Z_0,\ldots,Z_{n}\}$. The proposed estimators for the offspring distribution, based on $\overline{\mathcal{Z}}_n$ and $\mathcal{Z}_n$, are obtained by the Expectation--Maximization algorithm (EM algorithm). Making use of these estimators, one can obtain MDEs of $\theta_0$ based on $\overline{\mathcal{Z}}_n$ and $\mathcal{Z}_n$, respectively.\end{remark}

\setcounter{chapter}{5}
\setcounter{equation}{0} 
\noindent {\bf 5. Robustness}
\label{sec:robustness}

In this section, we address the issue of the robustness of the MDEs of $\theta$. For this purpose, we will study the behaviour of the corresponding disparity functional under contamination by considering the following model:
\begin{equation}\label{eq:mixture-model}
    p(\alpha,\theta,L)=(1-\alpha)p(\theta)+\alpha\eta_L,
\end{equation}
where $\alpha\in (0,1)$, $\theta\in\Theta$, $L\in\N_0$ and  $\eta_L$ is a point mass distribution at a nonnegative integer $L$. This model is called mixture model for gross errors at $L$ and it represents the simplest context of contamination. This approach was introduced in \citet*{Tukey} and consists of assuming the \emph{contaminated model} instead of the model distribution in order to explain or incorporate the outliers.

In the analysis of robustness of an estimator, an essential tool is the \emph{influence curve}, which for each disparity $\rho$ is a function of $L\in\N_0$ defined as
\begin{equation*}
\lim_{\alpha\to 0}\alpha^{-1}(T^\rho(p(\alpha,\theta,L))-T^\rho(p(\theta))).
\end{equation*}
Although the unboundedness of this function is an indicator of the misbehaviour of the MDEs of $\theta$ in presence of outliers, the influence curve can be very a deceptive measure of robustness (see \citet*{lindsay}). For this reason, we will also examine the $\alpha$-\emph{influence curves of $T^\rho(\cdot)$}, which are functions of $L\in\N_0$ defined as $\alpha^{-1}(T^\rho(p(\alpha,\theta,L))-T^\rho(p(\theta)))$, for each $\alpha\in (0,1)$.

Next theorem, whose proof can be read in Appendix, provides an expression for the influence curves and establishes conditions under which the disparity functional is robust at $p(\theta)$ against $100\alpha\%$ contamination by gross error at an arbitrary integer.

\begin{theorem}\label{thm:influence-curves}
Suppose the parameter space $\Theta$ is compact, (A2) holds (with $C_\rho=\Theta$), and $\F_\Theta$ is identifiable. For every $\alpha\in (0,1)$ and every $\theta\in\Theta$:
\begin{enumerate}
\item[(i)] Let $\rho$ be a disparity measure which can be redefined (without changing its minimizer) so that the related function $G(\cdot)$ is nonnegative, has a unique zero at 0 and verifies (A1) and (A5). For each $\alpha\in (0,1)$, $q\in\Gamma$ and $t\in\Theta$, define $\rho^*(\alpha,q,t)=\sum_{k=0}^\infty G^*(\delta(q,t,k))p_k(t)$, with $G^*(\delta)=G((1-\alpha)\delta)$. If (A4) holds, $T^{\rho}(p(\theta,\alpha,L))$ is unique for all $L$, and there exists a strictly increasing function $f$ such that $f(\rho^*(\alpha,p(\theta),t))=\rho((1-\alpha)p(\theta),t)$, for each $\alpha\in (0,1)$ and $t,\theta\in\Theta$; then
  \begin{enumerate}
   \item [(a)] $\lim_{L\to\infty} T^\rho(p(\theta,\alpha,L))=\theta$.
   \item [(b)] $T^\rho(p(\theta,\alpha,L))$ is a bounded and continuous function of $L$.
   \item [(c)] $\lim_{\alpha\to 0} \alpha^{-1}(T^\rho(p(\theta,\alpha,L))-\theta)=(I(\theta)p_L(\theta))^{-1}p_L'(\theta).$
 \end{enumerate}
\item[(ii)]   For the Hellinger distance, if $T^{HD}(p(\theta))\in int(\Theta)$, $\sum_{k=0}^\infty s_k''(\theta)p_k^{1/2}<0$, (A6) holds,  and $T^{HD}(p(\theta,\alpha,L))$ is unique for all $L$; then (i-a), (i-b) and (i-c) are also satisfied. 
\end{enumerate}
\end{theorem}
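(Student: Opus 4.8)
The plan is to reduce every assertion to the stationarity equation characterising the estimator, namely that $t=T^\rho(q)$ solves $\sum_{k=0}^\infty p_k'(t)A(\delta(q,t,k))=0$, evaluated at the contaminated law $q=p(\theta,\alpha,L)=(1-\alpha)p(\theta)+\alpha\eta_L$, whose Pearson residuals are $\delta(q,t,k)=[(1-\alpha)p_k(\theta)+\alpha\eta_L(k)]/p_k(t)-1$. Two facts are available at the outset. By Theorem~\ref{thm:exist-uniq-T}(ii) together with identifiability, $T^\rho(p(\theta))=\theta$ is the unique minimiser; and since $\|p(\theta,\alpha,L)-p(\theta)\|_1=\alpha\|\eta_L-p(\theta)\|_1\to 0$ as $\alpha\to 0$, Theorem~\ref{thm:contin-T} supplies the base point $T^\rho(p(\theta,\alpha,L))\to\theta$ as $\alpha\to0$, for each fixed $L$.

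For (i-a) and (i-b) I would argue through the value functional. Isolating the $k=L$ summand, the difference $\rho(p(\theta,\alpha,L),t)-\rho((1-\alpha)p(\theta),t)$ is a single term bounded in modulus by $2\sup|G|\,p_L(t)$; since $G$ is bounded by (A1) and $p_L(t)\to0$, this gives $\rho(p(\theta,\alpha,L),t)\to\rho((1-\alpha)p(\theta),t)$ as $L\to\infty$ for each $t$. The limit functional is identified through the hypothesised $f$: as $f$ is strictly increasing and $\rho^*(\alpha,p(\theta),\cdot)$ is itself a genuine disparity with $G^*(\delta)=G((1-\alpha)\delta)$ nonnegative, strictly convex and vanishing only at $0$, Theorem~\ref{thm:exist-uniq-T}(ii) and identifiability force its unique minimiser to be $t=\theta$, hence $\theta$ is also the unique minimiser of $\rho((1-\alpha)p(\theta),\cdot)$. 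To pass from value convergence to minimiser convergence I would take any subsequence with $T^\rho(p(\theta,\alpha,L_j))\to t^*$ (possible by compactness of $\Theta$), use the minimising inequality $\rho(p(\theta,\alpha,L_j),T^\rho(\cdot))\le\rho(p(\theta,\alpha,L_j),\theta)$, bound the right side by its limit and the left side from below by Fatou's lemma (legitimate since $G\ge0$), obtaining $\rho((1-\alpha)p(\theta),t^*)\le\rho((1-\alpha)p(\theta),\theta)$; uniqueness of the minimiser then gives $t^*=\theta$, proving (i-a). Part (i-b) is then immediate: boundedness is automatic because $T^\rho$ takes values in the compact $\Theta$, and continuity in $L$ is vacuous on the discrete domain $\N$ (the real content being the vanishing influence of a remote outlier established in (i-a)).

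The core is (i-c). Put $H(\alpha,t)=\sum_{k=0}^\infty p_k'(t)A(\delta(p(\theta,\alpha,L),t,k))$, so $H(\alpha,T^\rho(p(\theta,\alpha,L)))=0$ and $H(0,\theta)=0$. Conditions (A4) (summable domination of $p_k'$, $p_k''$ and $u(\cdot,k)^2p_k$) and (A5) (boundedness of $A$, $A'$, $A'(1+\delta)$, $A''(1+\delta)$) permit differentiation under the summation sign and make $H$ continuously differentiable near $(0,\theta)$, so the implicit function theorem applies once the $t$-partial is shown nonzero. Using $A(0)=0$, $A'(0)=1$ and $\delta(p(\theta,0,L),\theta,k)=0$, a direct computation gives $\partial_\alpha H(0,\theta)=\sum_k p_k'(\theta)(\eta_L(k)/p_k(\theta)-1)=p_L'(\theta)/p_L(\theta)$, where $\sum_k p_k'(\theta)=0$ (differentiating $\sum_k p_k(\theta)=1$, justified by (A4)(a)), while $\partial_t H(0,\theta)=-\sum_k p_k'(\theta)u(\theta,k)=-\sum_k u(\theta,k)^2p_k(\theta)=-I(\theta)\ne0$. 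The implicit function theorem then yields $\lim_{\alpha\to0}\alpha^{-1}(T^\rho(p(\theta,\alpha,L))-\theta)=-\partial_\alpha H(0,\theta)/\partial_t H(0,\theta)=(I(\theta)p_L(\theta))^{-1}p_L'(\theta)$, which is (i-c).

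For the Hellinger case (ii), where (A1) and (A5) fail, I would recast everything in $l_2$. With $A(\delta)=2[(\delta+1)^{1/2}-1]$ and $p_k'(t)=2s_k(t)s_k'(t)$, the stationarity equation collapses, after $\sum_k s_k'(t)s_k(t)=0$, to $\sum_k s_k'(t)\,q_k^{1/2}=0$ with $q=p(\theta,\alpha,L)$. Replacing (A4)/(A5) by the $l_2$-differentiability (A6), the same implicit differentiation at $(0,\theta)$ gives $\partial_\alpha=s_L'(\theta)/(2s_L(\theta))$ and $\partial_t=\sum_k s_k''(\theta)s_k(\theta)$, equal to $-I(\theta)/4<0$ by the hypothesis $\sum_k s_k''(\theta)p_k^{1/2}<0$ (equivalently the second derivative of $\sum_k s_k(\theta)^2=1$); dividing and substituting $s_L'(\theta)=p_L'(\theta)/(2s_L(\theta))$ and $p_L(\theta)=s_L(\theta)^2$ reproduces exactly $(I(\theta)p_L(\theta))^{-1}p_L'(\theta)$. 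The analogues of (i-a),(i-b) transfer using $HD(q,t)=\|q^{1/2}-s(t)\|_2^2$: isolating the $k=L$ term shows $HD(p(\theta,\alpha,L),t)\to\|\sqrt{1-\alpha}\,s(\theta)-s(t)\|_2^2+\alpha$ as $L\to\infty$, whose unique minimiser is $t=\theta$ by Cauchy--Schwarz and identifiability, and the Fatou/subsequence argument closes (ii-a), with (ii-b) again from compactness.

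The main obstacle is twofold. First, the contaminated law does \emph{not} converge in $l_1$ (nor $\sqrt{\cdot}$ in $l_2$) to $(1-\alpha)p(\theta)$ as $L\to\infty$---the total-variation gap stays equal to $\alpha$---so Theorems~\ref{thm:contin-T} and~\ref{thm:contin-T-HD} cannot be invoked for (i-a)/(ii-a), and minimiser convergence must be secured ``by hand'' via the Fatou lower-semicontinuity plus unique-minimiser argument above. Second, the whole of (i-c)/(ii-c) rests on differentiating the infinite series defining $H$ term by term and on verifying the nonsingularity $\partial_t H(0,\theta)=-I(\theta)\ne0$; this is precisely where the dominating functions of (A4) and the boundedness of (A5)---or, for Hellinger, the $l_2$-differentiability (A6)---are indispensable, and the bookkeeping of these dominations is the most delicate part of the argument.
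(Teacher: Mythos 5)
Your proposal is correct and follows essentially the same route as the paper: for (i-a), a compactness/subsequence argument combined with convergence of the disparity values and identification of $\theta$ as the unique minimiser of $\rho((1-\alpha)p(\theta),\cdot)$ through the hypothesised $f$ and $\rho^*$; for (i-c), a first-order expansion of the stationarity equation $\dot{\rho}=0$ producing the ratio $u(\theta,L)/I(\theta)$. The differences are only in packaging --- you use Fatou's lemma where the paper invokes a generalized dominated convergence theorem, you phrase (i-c) via the implicit function theorem where the paper uses a Taylor expansion plus L'H\^opital's rule (the partial derivatives you compute are exactly the paper's two limits), and you sketch part (ii) explicitly through the $l_2$ stationarity equation where the paper simply defers to Beran (1977).
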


Observe that $p_L'(\theta)(I(\theta)p_L(\theta))^{-1}$ can be an unbounded function of $L$. Nevertheless, from Theorem \ref{thm:influence-curves}~\emph{(a)} and \emph{(b)}, we have that for every $\alpha\in (0,1)$, the  $\alpha$--influence curves are bounded continuous functions of $L$ verifying $\lim_{L\to\infty} \alpha^{-1}\linebreak\cdot(T^{\rho}(p(\theta,\alpha,L))-\theta)=0$, and $\lim_{L\to\infty} \alpha^{-1}(T^{HD}(p(\theta,\alpha,L))-\theta)=0$, respectively. Consequently, the associated disparity functionals are robust at $p(\theta)$ against $100\alpha\%$ contamination by gross error at an arbitrary integer $L$.

\vspace*{3ex}

Another important concept in the study of the robustness is the \emph{breakdown point}. The breakdown point of a disparity functional $T^\rho(\cdot)$ at $q\in\Gamma$ is given by:
\begin{equation*}
\alpha^*(T^\rho,q)=\inf\left\{\alpha\in(0,1): b(\alpha;T^\rho,q)=\infty\right\},
\end{equation*}
with $b(\alpha;T^\rho,q)=\sup\ \{|T^\rho((1-\alpha)q+\alpha\overline{q})-T^\rho(q)|:\overline{q}\in\Gamma\}$.
Intuitively, the breakdown point represents the smallest amount of contamination that can cause the estimator to take arbitrarily large values. Note that $b(\alpha;T^\rho,q)=\infty$ is equivalent to the existence of a sequence of probability distributions $\{q_n\}_{n\in\N}$ verifying $|T^\rho((1-\alpha)q+\alpha q_n)-T^\rho(q)|\to\infty$ and in that case, we say there is \emph{breakdown} in $T^\rho(\cdot)$ for a level of contamination equals $\alpha$. The sequence $\{q_n\}_{n\in \mathbb{N}}$ is called sequence of contaminating probability distributions. This fact is useful for the establishment of a lower bound for the breakdown point of the MDEs of $\theta$ in the following theorem.

\begin{theorem}\label{thm:breakdown-point-MDE}
\begin{enumerate}
\item [(i)] Assume that the contaminant sequence $\{q_n\}_{n\in\N}$, the distributions of the family $\mathcal{F}_\Theta$, and $\theta^*\in\Theta$ verify:
\begin{enumerate}
\item [(a)] $\sum_{k=0}^\infty \min\{p_k(\theta^*),q_{n,k}\}\to 0$.
\item [(b)] $\sum_{k=0}^\infty \min\{p_k(\theta),q_{n,k}\}\to 0$, uniformly for $\theta\in\Theta$ such that $|\theta|\leq c$, for any fixed $c\in\R$.
\item [(c)] $\sum_{k=0}^\infty \min\{p_k(\theta^*),p_k(\theta_n)\}\to 0$ if $|\theta_n|\to\infty$.
\item [(d)] $G(-1)$ and $\lim_{t\to\infty} G(t)/t$ are finite.
\end{enumerate}
Then, the asymptotic breakdown point of the MDE of $\theta^*$ is at least $1/2$.
\item [(ii)] Assume (A2), $q\in\hat\Gamma_{HD}$, and $T^{HD}(q)\in int(\Theta)$. Let $\varrho(q,p(\theta))= \linebreak\sum_{k=0}^\infty (q_k p_k(\theta))^{1/2}$,  $\hat{\varrho}=\max_{\theta\in\Theta} \varrho(q,p(\theta))$, $\varrho^*= \lim_{M\to\infty} \sup_{|\theta|>M}  \varrho(q,p(\theta))$ and $h_n=(1-\alpha)q+\alpha q_n$, $0<\alpha<1$, with $q_n\in \Gamma$, for every $n$. Assume that for each $n\geq 1$, $T^{HD}(h_n)$ exists and is unique. It holds that if $\alpha<(\hat{\varrho}-\varrho^*)^2/[1+(\hat{\varrho}-\varrho^*)^2]$, then there is no sequence $\{h_n\}_{n\in\N}$ of the form  defined above for which $\lim_{n\to\infty}|T^{HD}(h_n)-T^{HD}(q)|=\infty$.
\end{enumerate}
\end{theorem}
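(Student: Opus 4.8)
The plan is to argue by contradiction, following the strategy developed for i.i.d.\ models (see \citet*{lindsay}) but adapted to the functional $T^\rho$ on $\Gamma$. Fix $\alpha<1/2$ and a contaminating sequence $\{q_n\}$ satisfying (a)--(c), write $h_n=(1-\alpha)p(\theta^*)+\alpha q_n$, and suppose breakdown occurs, so that $\theta_n:=T^\rho(h_n)$ satisfies $|\theta_n|\to\infty$. Set $c_G=\lim_{t\to\infty}G(t)/t$, which is finite by (d) and equals $\sup_\delta G'(\delta)$, so that $G'(\delta)\le c_G$ for every $\delta$. I would first produce an asymptotic upper bound for $\rho(h_n,\theta^*)$: splitting $\N_0$ according to whether $q_{n,k}$ is negligible against $p_k(\theta^*)$ and using (a), on the bulk of the mass of $p(\theta^*)$ one has $\delta(h_n,\theta^*,k)\approx-\alpha$, while on the remaining indices $G(\delta)\,p_k(\theta^*)\approx c_G\,\alpha q_{n,k}$; this yields $\limsup_n\rho(h_n,\theta^*)\le G(-\alpha)+c_G\,\alpha$.

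\textbf{The comparison.} The crux is a matching lower bound $\liminf_n\rho(h_n,\theta_n)\ge c_G(1-\alpha)+G(\alpha-1)$. I would split $\N_0$ into $S_n$, capturing the mass of $p(\theta^*)$, and its complement. On $S_n$, condition (c) forces $p_k(\theta_n)\to0$ in aggregate, so the residuals diverge and $G(\delta)\,p_k(\theta_n)=\tfrac{G(\delta)}{\delta}(h_{n,k}-p_k(\theta_n))$ is estimated via $G(t)/t\to c_G$ by $c_G\,h_{n,k}$, whose total mass is $\ge1-\alpha$. On $S_n^c$, where $h_{n,k}\approx\alpha q_{n,k}$ and $p(\theta_n)$ carries essentially all its mass, Jensen's inequality applied to the convex $G$ with weights $p_k(\theta_n)$, together with $\sum_k\delta(h_n,\theta_n,k)p_k(\theta_n)=0$ and (a), bounds the contribution below by $G(\alpha-1)$; condition (b) is what excludes the spurious alternative of the minimizer being retained at a bounded parameter while $q_n$ escapes. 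Breakdown then forces $\rho(h_n,\theta_n)\le\rho(h_n,\theta^*)$, i.e.\ $c_G(1-2\alpha)\le G(-\alpha)-G(\alpha-1)$. Writing $\psi(\alpha)=c_G(1-2\alpha)-[G(-\alpha)-G(\alpha-1)]$, a short computation gives $\psi(1/2)=0$, the antisymmetry $\psi(1-\alpha)=-\psi(\alpha)$, and $\psi'(\alpha)=G'(-\alpha)+G'(\alpha-1)-2c_G\le0$ from $G'\le c_G$; strict convexity makes this strict, so $\psi(\alpha)>0$ for every $\alpha<1/2$, contradicting the displayed inequality and proving that the breakdown point is at least $1/2$.

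\textbf{Plan for part (ii).} For the Hellinger distance I would pass to the affinity. From the definition of $HD$ one has $HD(q,\theta)=2\bigl(1-\varrho(q,p(\theta))\bigr)$, so that $T^{HD}$ maximizes $\theta\mapsto\varrho(q,p(\theta))$. The elementary inequalities $\sqrt{a+b}\le\sqrt a+\sqrt b$ and $\sqrt{a+b}\ge\sqrt a$, applied termwise to $h_n=(1-\alpha)q+\alpha q_n$ and using $\varrho(q_n,p(\theta))\le1$, give
\[(1-\alpha)^{1/2}\varrho(q,p(\theta))\le\varrho(h_n,p(\theta))\le(1-\alpha)^{1/2}\varrho(q,p(\theta))+\alpha^{1/2}.\]
Suppose $\theta_n:=T^{HD}(h_n)$ has $|\theta_n|\to\infty$. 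Optimality of $\theta_n$ for $h_n$ and of $T^{HD}(q)$ for $q$ yields $\varrho(h_n,p(\theta_n))\ge\varrho(h_n,p(T^{HD}(q)))\ge(1-\alpha)^{1/2}\hat\varrho$, while the upper inequality together with $\limsup_n\varrho(q,p(\theta_n))\le\varrho^*$ gives $\varrho(h_n,p(\theta_n))\le(1-\alpha)^{1/2}\varrho^*+\alpha^{1/2}+o(1)$. Combining and letting $n\to\infty$ produces $(1-\alpha)^{1/2}(\hat\varrho-\varrho^*)\le\alpha^{1/2}$; squaring and rearranging gives $\alpha\ge(\hat\varrho-\varrho^*)^2/[1+(\hat\varrho-\varrho^*)^2]$, contradicting the hypothesis on $\alpha$.

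\textbf{Main obstacle.} I expect the delicate point to be the lower bound in part (i): making rigorous the two regional estimates on $S_n$ and $S_n^c$---in particular the passage $G(\delta)/\delta\to c_G$ on $S_n$, where $p(\theta_n)$ and $q_n$ may interact, and the control of the overlap terms via Jensen on $S_n^c$---while keeping all error terms $o(1)$ uniformly. Once these are in place, the reduction to the sign of $\psi$ and the antisymmetry about $\alpha=1/2$ are routine, and part (ii), resting only on the affinity reformulation and the sub/super-additivity of the square root, is comparatively straightforward.
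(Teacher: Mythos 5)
Your proposal is correct and takes essentially the same route as the paper: for part (i) the paper simply defers to Theorem 4.1 of Park and Basu (2004) ``replacing integrals with sums,'' whose proof is exactly your singular-decomposition argument leading to the inequality $c_G(1-2\alpha)\le G(-\alpha)-G(\alpha-1)$ and its refutation for $\alpha<1/2$ via convexity of $G$, and for part (ii) it defers to Theorem 3 of Simpson (1987), which is precisely your affinity argument with the sub/super-additivity of the square root. The only differences are matters of written detail (the paper omits both arguments entirely), and the technical points you flag as the main obstacle are the same ones handled in the cited proofs.
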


The proof of \emph{(i)} is analogous to those given in Theorem 4.1 in \citet*{Park-Basu-2004} replacing integrals with sums. The proof of \emph{(ii)} is exactly the same as Theorem 3 in \citet*{Simpson-1987} and it is omitted. In particular, since $p=p(\theta_0)$, then $\hat{\varrho}=1$ and as a consequence, the asymptotic breakdown point for HD is at least $1/2$ when $\varrho^*=0$, which usually holds.

\setcounter{chapter}{6}
\setcounter{equation}{0} 
\noindent {\bf 6. Simulated example}

\label{sec:examples}

Through a simulated example, we compare the behaviour of the MHDEs, MNEDEs and MLDEs based on the whole family tree under an uncontaminated model and under mixture models for gross errors. To this end, we have considered as initial model a CBP starting with one individual and Poisson distributions as offspring and control distributions. In particular, the offspring distribution is a Poisson distribution with the parameter $\theta_0=7$ and the variable $\phi_n(k)$ follows Poisson distribution with parameter  $\lambda k$, with $\lambda=0.3$, for each $k\geq 0$, $n\geq 0$. Therefore, the offspring mean and variance are $m=\sigma^2=7$, and $\tau_m=\theta_0\lambda=2.1$ (see (A3) for definition). The parameter $\tau_m$ is referred to as the asymptotic mean growth rate and is the threshold parameter of the model (see \citet*{art-2005a}). In practice, control functions, $\phi_n(k)$, following Poisson distributions of parameters $\lambda k$ are appropriate to describe an environment with expected immigration or emigration depending on $\lambda>1$ or $<1$. In our example we consider a model with expected emigration although supercritical ($\tau_m>1$).

First, we  show that in a contamination-free context, MHDEs and MNEDEs are as efficient as MLDEs. To this end, we have simulated 10 generations of $N=100$ CBPs following the previous model, and we have estimated the relative efficiency of $\tilde{\theta}_n^{NED}(\hat{p}_n)$ to $\tilde{\theta}_n^{HD}(\hat{p}_n)$, of $\tilde{\theta}_n^{HD}(\hat{p}_n)$ to $\tilde{\theta}_n^{LD}(\hat{p}_n)$ and of $\tilde{\theta}_n^{NED}(\hat{p}_n)$ to $\tilde{\theta}_n^{LD}(\hat{p}_n)$ in each generation by the ratios of these mean squared errors:
$$\frac{\text{MSE(}HD\text{)}}{\text{MSE(}NED\text{)}}, \ \frac{\text{MSE(}LD\text{)}}{\text{MSE(}HD\text{)}},\  \frac{\text{MSE(}LD\text{)}}{\text{MSE(}NED\text{)}},$$
where MSE($\rho$)$=N^{-1}\sum_{i=1}^{N} (\tilde{\theta}_i^{\rho}(\hat{p}_n)-\theta_0)^2$, $\rho\in$ $\{LD,$ $NED,$ $HD\}$,
with $i$ indicating the simulated process, for $i=1,\ldots,N$, and $n$ the generation, for $n=1,\ldots,10$. The evolution of these estimates is shown in Figure \ref{im:ej2-effic-grid-estimates} (first row -left), where one observes that as generations go up MNED and MHD procedures are shown as efficient as the MLD one. 

In a contaminated context, to illustrate and compare the accuracy of the estimates obtained by MHD and MNED methods, we have considered several different contaminated models for the offspring distribution in the aforementioned CBP. Specifically, we have contaminated the reproduction law according to the mixture model for gross errors, for $\alpha=0.05,$ $0.1,$ $0.15,$ $0.2,$ $0.25,$ $0.3,$ $0.35,$ $0.4,$ $0.45,$ $0.5$, and $L=0,1,\ldots,25$, obtaining 260 different contaminated CBPs.

For a generic CBP, the information given by a sample observed until a fixed generation $n$ depends on its asymptotic mean growth rate, $\tau_m$; being poorer when $\tau_m\approx 1$ than when $\tau_m>1$. This implies that to compare the behaviour of the different estimators for each one of the contaminated models (which have asymptotic  mean growth rates, called $\tau_m(\theta_0,\alpha, L)$, of different magnitudes) one needs to observe different numbers of generations depending on $\tau_m(\theta_0,\alpha, L)$- going in our example from $n=8$ for $\tau_m(\theta_0,\alpha, L)=4.8$ to $n=65$ for $\tau_m(\theta_0,\alpha, L)=1.05$.

\begin{figure*}
\scalebox{0.23}{\includegraphics[angle=270]{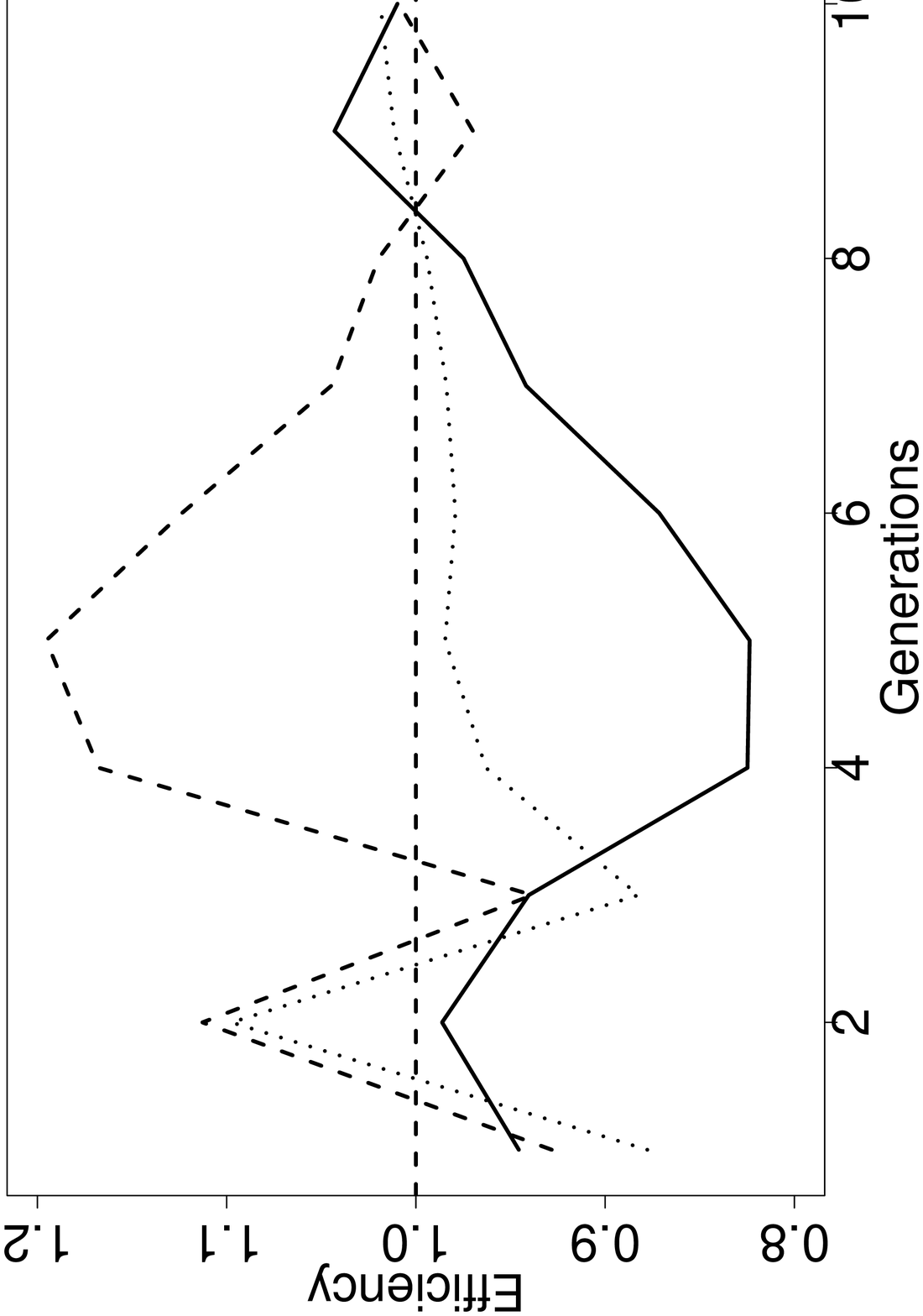}\hspace*{3em}
\includegraphics[angle=270]{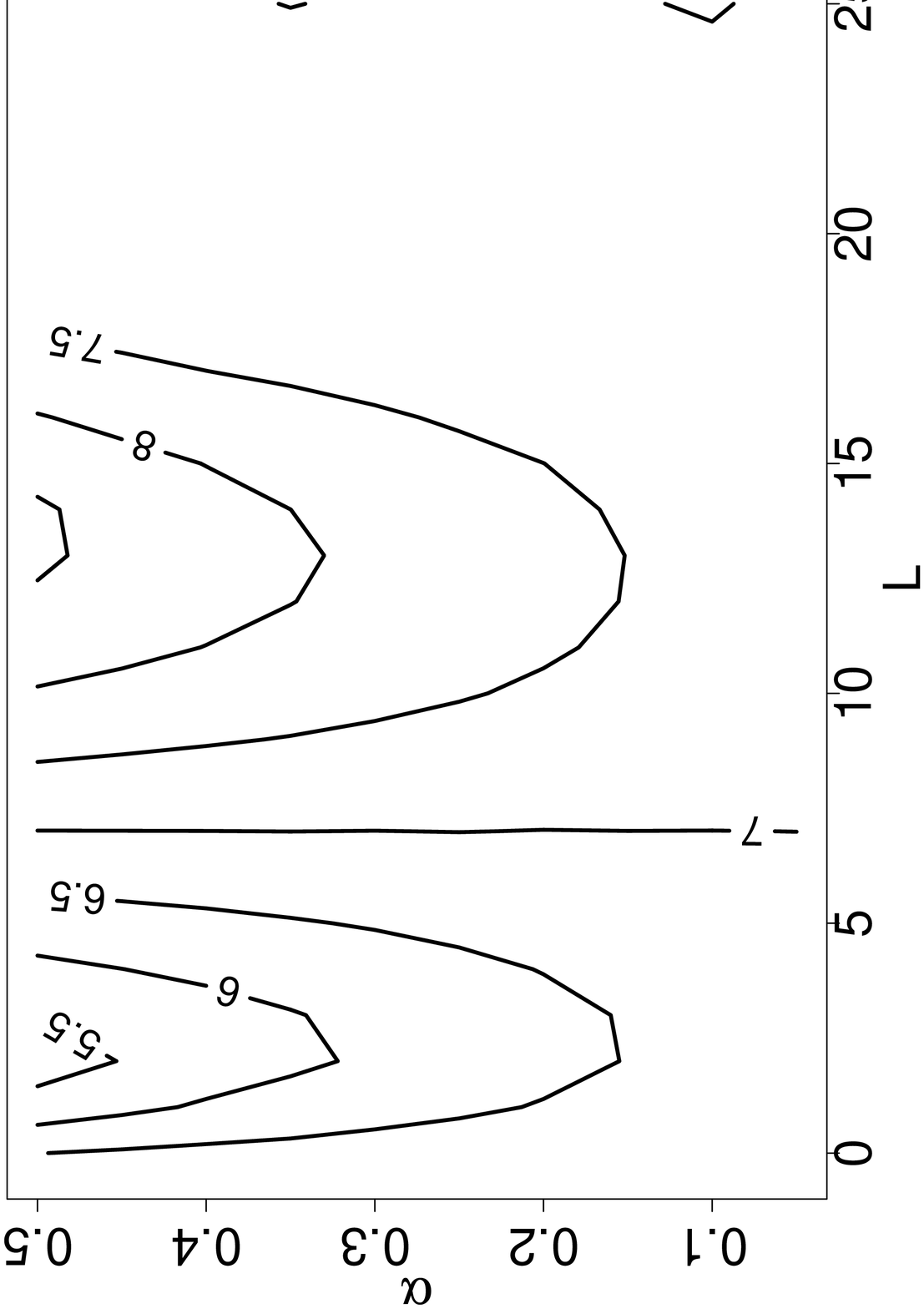}}\vspace*{0.2cm}

\scalebox{0.23}{\includegraphics[angle=270]{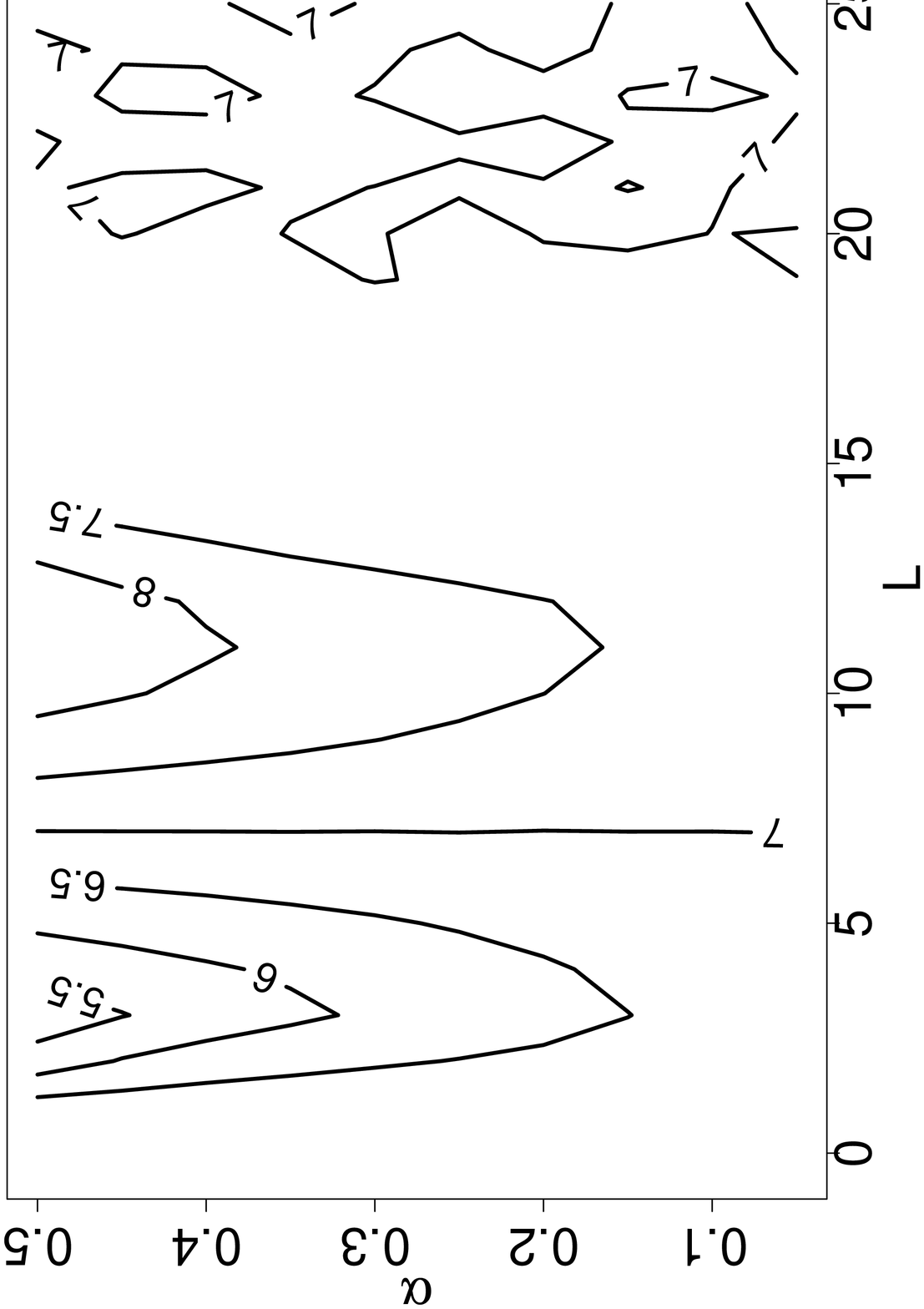}\hspace*{3em}
\includegraphics[angle=270]{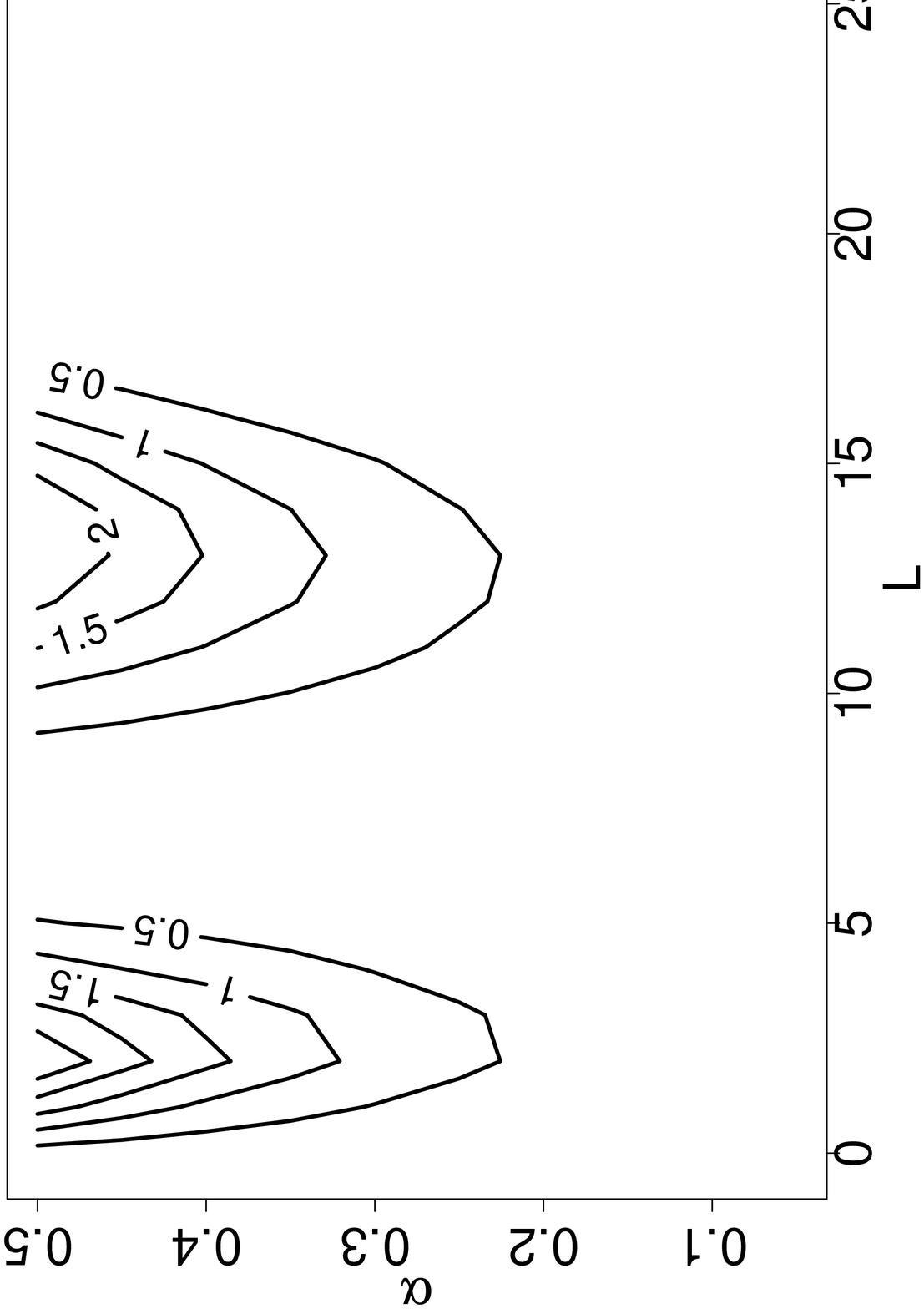}}\vspace*{0.2cm}

\scalebox{0.23}{\includegraphics[angle=270]{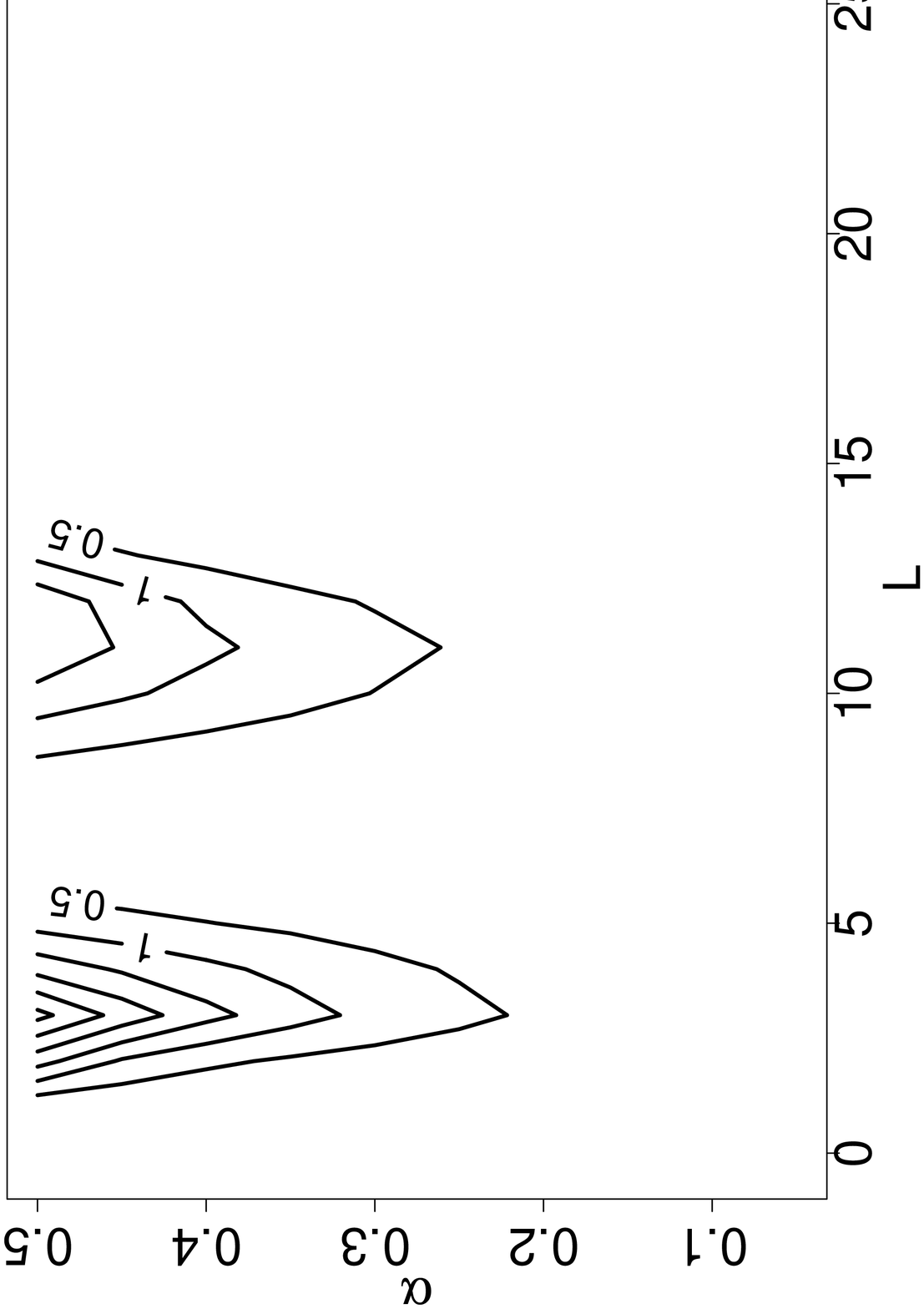}\hspace*{3em}
\includegraphics[angle=270]{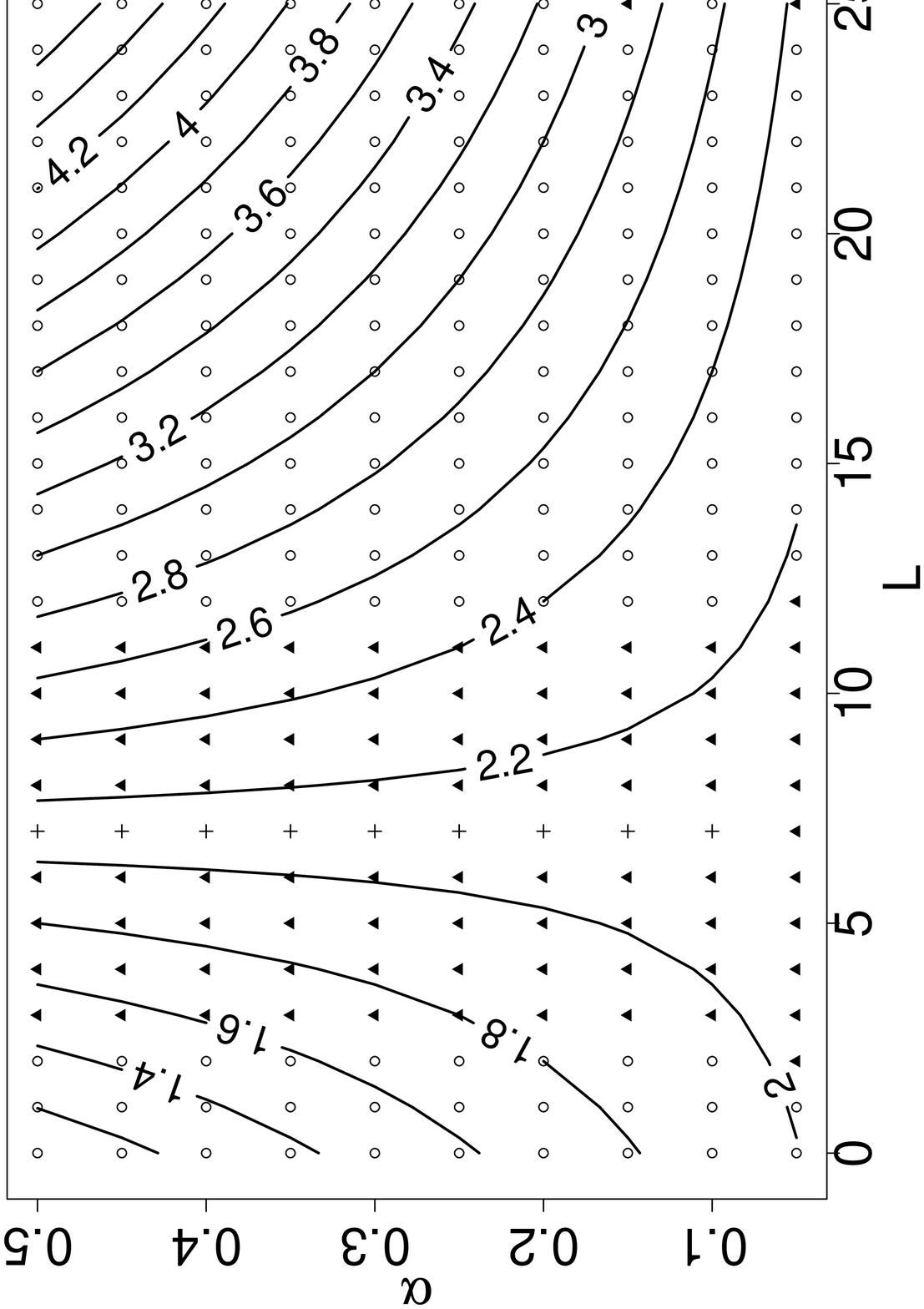}}
\caption{First row. Left: evolution of the estimates of the relative efficiency of $\tilde{\theta}^{NED}(\hat{p}_n)$ to $\tilde{\theta}^{HD}(\hat{p}_n)$ (solid line), the relative efficiency of $\tilde{\theta}^{HD}(\hat{p}_n)$ to $\tilde{\theta}^{LD}(\hat{p}_n)$ (dashed line) and the relative efficiency of $\tilde{\theta}^{NED}(\hat{p}_n)$ to $\tilde{\theta}^{LD}(\hat{p}_n)$ (dotted line). Right: contour plots of the means of the MHDEs. Second row. Left: the means of the MNEDEs of $\theta_0=7$ for each contaminated offspring distribution. Right: contour plots of the MSEs of the MHDEs of $\theta_0=7$. Thrird row. Left: MSEs of the MNEDEs of $\theta_0=7$  for each contaminated offspring distribution. Right: Contour plot of the asymptotic mean growth rates of the contaminated models (solid line) and points $(L,\alpha)$ where the minimum of MSE of the estimates of $\theta_0=7$ by the three methods is attained in the MLDE (crosses), in the MHDE (filled triangles) and in the MNEDEs (circles). }\label{im:ej2-effic-grid-estimates}
\end{figure*}

For each simulated process, we have determined the MHDEs, MNEDEs and MLDEs of $\theta_0$ in its last generation. In Figure \ref{im:ej2-effic-grid-estimates}, we show the mean (over the 100 simulations) of the MHDEs (first row -right) and of the MNEDEs (second row -left) of $\theta_0$, for each one of the 260 contaminated models. Moreover, the respective MSEs for both methods are represented in Figure \ref{im:ej2-effic-grid-estimates} (second row -right and third row -left).

  In addition, Figure \ref{im:ej2-effic-grid-estimates} (thrid row -right) shows the contour plot of the  asymptotic mean growth rate of the contaminated models,
  $\tau_m(\theta_0,\alpha,L)$, with  the underlying  points representing the minimum disparity method which provides the smallest MSE for each contaminated model. In view of these plots, one can deduce that the MNEDE supplies more accurate estimates in most of the contaminated models (166 models, that is $63.85\%$ of the models), but the best method when the contaminated state is between 3 and 11 is usually the MHD (85 models, that is $32.69\%$ of models). However, the MLDE only behaves properly in 9 models ($3.46\%$ of the models), where $L$ is equal 7 (consequently, the offspring mean remains unchanging).
\begin{table}
\caption{Relative bias for Hellinger distance and negative exponential disparity for the mixture models for gross errors with $L=0$ and different values of $\alpha$.}  \label{tab:ej2-inlier0}
\centering\begin{tabular}{c | c | c }
\hline\noalign{\smallskip}
$\alpha$ & $\frac{\Delta T^{HD}(\alpha,L)}{\Delta T^{LD}(\alpha,L)}$ & $\frac{\Delta T^{NED}(\alpha,L)}{\Delta T^{LD}(\alpha,L)}$  \\
\noalign{\smallskip}\hline\noalign{\smallskip}
$-0.0001$ & $1.0108310$ & $0.9980851$ \\
$-0.0002$ & $1.0519480$ & $0.9911580$ \\
$-0.0003$ & $1.1041350$ & $0.9787859$ \\
$-0.0004$ & $1.1383230$ & $0.9600473$ \\
$-0.0005$ & $1.1891900$ & $0.9338905$ \\
$-0.0006$ & $1.2520860$ & $0.9004372$ \\
$-0.0007$ & $1.3377720$ & $0.8556661$ \\
$-0.0008$ & $1.4711780$ & $0.7992881$ \\
$-0.0009$ & $1.7769600$ & $0.7296631$ \\
\noalign{\smallskip}\hline
\end{tabular}
\end{table}
We have also studied the performance of these methods in presence of inliers, which correspond to the model introduced at \eqref{eq:mixture-model} with $\alpha<0$ such that $p(\theta, \alpha, L)$ is a probability distribution. To this end, we compare the potential bias, defined as $\Delta T^{\rho} (\alpha, L)=T^{\rho}(p(\theta_0,\alpha,L))-T^{\rho}(p(\theta_0))$, with $\rho\in\{NED, HD, LD\}$. In fact,  we examine the relative bias of MHDE and MNEDE with respect to MLDE under mixture model for gross errors located at $L=0$ (Table \ref{tab:ej2-inlier0}), at $L=8$  (Table \ref{tab:ej2-inlier8}) and at $L=20$ (Table \ref{tab:ej2-inlier20}) for different values of $\alpha$. The results show that the MNEDE has decreasingly less bias than the MLDE in all the cases, whereas the inliers have the opposite effect on the MHDE.
\begin{table}
\caption{Relative bias for Hellinger distance and negative
exponential disparity for the mixture models for gross errors with
$L=8$ and different values of $\alpha$.}  \label{tab:ej2-inlier8}
\centering\begin{tabular}{c | c | c } \hline\noalign{\smallskip}
$\alpha$ & $\frac{\Delta T^{HD}(\alpha,L)}{\Delta T^{LD}(\alpha,L)}$ & $\frac{\Delta T^{NED}(\alpha,L)}{\Delta T^{LD}(\alpha,L)}$  \\
\noalign{\smallskip}\hline\noalign{\smallskip}
-0.01&1.022095&1.0004884\\
-0.02&1.041636&0.9985267\\
-0.03&1.064554&0.9947680\\
-0.04&1.089620&0.9895525\\
-0.05&1.117349&0.9830758\\
-0.06&1.148303&0.9737196\\
-0.07&1.183068&0.9622187\\
-0.08&1.222421&0.9480543\\
-0.09&1.267872&0.9306867\\
\noalign{\smallskip}\hline
\end{tabular}
\end{table}

\begin{table}
\caption{Relative bias for Hellinger distance and negative exponential disparity for the mixture models for gross errors with $L=20$ and different values of $\alpha$.} \label{tab:ej2-inlier20}
\centering\begin{tabular}{c | c | c }
\hline\noalign{\smallskip}
$\alpha$ & $\frac{\Delta T^{HD}(\alpha,L)}{\Delta T^{LD}(\alpha,L)}$ & $\frac{\Delta T^{NED}(\alpha,L)}{\Delta T^{LD}(\alpha,L)}$  \\
\noalign{\smallskip}\hline\noalign{\smallskip}
$-0.0000075$ & $1.2087000$ & $0.9920636$ \\
$-0.0000100$ & $1.1790380$ & $0.9820383$ \\
$-0.0000125$ & $1.1987310$ & $0.9682116$ \\
$-0.0000150$ & $1.2351010$ & $0.9501191$ \\
$-0.0000175$ & $1.2755030$ & $0.9272487$ \\
$-0.0000200$ & $1.3213920$ & $0.8990362$ \\
$-0.0000225$ & $1.3841210$ & $0.8648598$ \\
$-0.0000250$ & $1.4062240$ & $0.8240357$ \\
$-0.0000275$ & $1.6524540$ & $0.7758114$ \\
\noalign{\smallskip}\hline
\end{tabular}
\end{table}

\setcounter{chapter}{7}
\setcounter{equation}{0} 
\noindent {\bf 7. Concluding remark}
\label{sec:concludings}

In the context of controlled branching processes with random control functions, assuming a general parametric framework for the offspring distribution, we have studied the minimum disparity estimation of its main parameter.

First, we have established conditions for the existence and uniqueness of MDEs for a general discrete model. Moreover, it has been established that the proposed MDEs are strongly consistent as the associated nonparametric estimators are. In particular, we have considered as the nonparametric estimator of the offspring law, the MLE based on the observation of the entire family tree until a certain generation, which is consistent under some regularity conditions. Based on this nonparametric estimator, the limiting normality of the corresponding MDEs of the offspring parameter, suitably normalized, has been also established. These results are regarded as a generalization of those given for Bienaym\'e--Galton--Watson processes (see \citet*{Sriram-2000}), by considering the more general branching structure given by CBPs, and more disparity measures besides Hellinger one.

The MDEs proposed for the offspring parameter are appropriate robust alternatives to the  MLE based on the whole family tree. Focussing our attention on the MHDE and MNEDE, through  a simulated example, we show both are robust against outliers, showing more insensitive the MNEDE to gross-errors at points far from the offspring parameters, and the MHDE when they are at points close to the same one. However, the robustness against inliers is only kept by MNEDE.

\vskip 14pt
\noindent {\large\bf Acknowledgements}

This research has been supported by the Ministerio de Educaci\'on, Cultura y Deporte (grant FPU13/03213), Ministerio de Econom\'ia y Competividad (grant MTM2012-31235), the Gobierno de Extremadura (grant GR15105) and the FEDER.
\par

\setcounter{chapter}{8}
\setcounter{equation}{0} 
\noindent {\bf Appendix}

\noindent{\bf Proof of Theorem \ref{thm:exist-uniq-T}}

\emph{(i)} It is immediate from the definition of $\tilde{\Gamma}_\rho$ and the continuity of $\rho(q,\cdot)$ in $C_\rho$.

\emph{(ii)} From $\inf_{t\in \Theta\backslash K}\rho(p(\theta^*),t)>0$, it is deduced that $\theta^*\in K$, and hence, $\min_{t\in K}\rho(p(\theta^*),t)=0$; consequently, $T^\rho(p(\theta^*))$ exists. Since the function $G(\cdot)$ is nonnegative and has a unique zero at 0, $\rho(p(\theta^*),\theta)=0$ if and only if $p(\theta^*)=p(\theta)$, and from the identifiability of $\F_\Theta$, this can only occurs when $\theta^*=\theta$.

\noindent{\bf Proof of Theorem \ref{thm:contin-T}}

We present an adaptation and extension of the proofs of Proposition 2 in \citet*{Basu-MNEDE} and of Theorem 3.2 in \citet*{Park-Basu-2004}, developed for general continuous models.

Let $\theta=T^\rho(q)$ (there exists and it is unique by Theorem \ref{thm:exist-uniq-T}). For each $t\in\Theta$, using the mean value theorem for the functions $h_k(y)=G(y/p_k(t)-1)$, for each $k\geq 0$, $y>0$, one can prove that $|\rho(q_n,t)-\rho(q,t)|\leq M \sum_{k=0}^\infty |q_{n,k}-q_k|\to 0$, as $n\to\infty$, being $M$ an upper bound of the function $G'(\cdot)$. Hence,
\begin{equation}\label{eq:lim-sup-disparity}
\sup_{t\in\Theta} |\rho(q_n,t)-\rho(q,t)|\to 0,
\end{equation}
obtaining that $\rho(\cdot,t)$ is continuous in $l_1$ for each $t\in\Theta$. From this latter, it is deduced that $q_n\in\tilde{\Gamma}_\rho$ eventually. In fact, if $q_n\notin\tilde{\Gamma}_\rho$ eventually, for all $N\in\N$, there exists $k_N>N$ such that
$$\inf_{t\in\Theta\backslash C_\rho} \rho(q_{k_N},t)\leq\min_{t\in C_\rho} \rho(q_{k_N},t),$$
therefore $q\not\in\tilde{\Gamma}_\rho$, which is in contradiction with the hypotheses of the theorem. Thus, using Theorem \ref{thm:exist-uniq-T}, there exists $T^\rho(q_n)$, which we denote $\theta_n$ to ease the notation, and $\theta_n\in C_\rho$ eventually. Finally, one has to show that $\theta_n\to\theta$.

From \eqref{eq:lim-sup-disparity}, $\rho(q_n,\theta_n)\to \rho(q,\theta)$ and $|\rho(q_n,\theta_n)-\rho(q,\theta_n)|\to 0$ are deduced, so $\rho(q,\theta_n)\to \rho(q,\theta)$.

If the sequence $\{\theta_n\}_{n\geq 0}$ does not converge to $\theta$, then there exists a subsequence $\{\theta_{n_j}\}_{j\in\N}\subseteq\{\theta_{n}\}_{n\in\N}$ such that $\theta_{n_j}\to\theta^*\neq\theta$, as $j\to\infty$. From (A1), taking into account Remark \ref{rem:continuity}~\emph{(i)}, one has that $\rho(q,\cdot)$ is continuous and $\rho(q,\theta_{n_j})\to \rho(q,\theta^*)$, as $j\to\infty$. Due to all of the above, one has $\rho(q,\theta)=\rho(q,\theta^*)$, which contradicts the uniqueness of $T^\rho(q)$.

\noindent{\bf Proof of Theorem \ref{thm:contin-T-HD}}

It is analogous to the previous proof taking into account that \eqref{eq:lim-sup-disparity} for $\rho=HD$ is followed from
$$\sup_{t\in\Theta}|HD(q_n,t)^{1/2}-HD(q,t)^{1/2}|\leq \|q_n^{1/2}-q^{1/2}\|_2.$$

\noindent{\bf Proof of Theorem \ref{thm:consistency-MDE}}\label{ape:consistency-MDE}

First of all, note that since $\tilde{p}_{n,k}\to p_k$ a.s., for each $k\geq 0$, by Glick's Theorem (see \citet*{Devroye-Gyorfi},  p.10), one has $\tilde{p}_{n}\to p$ a.s. in $l_1$.

\emph{(i)} It is immediate from Theorem \ref{thm:contin-T} and the fact that $\tilde{p}_{n}\to p$ a.s. in $l_1$.

\emph{(ii)} The proof is analogous to that of Theorem 3.2 in \citet*{Sriram-2000}. Bearing in mind Theorem \ref{thm:contin-T-HD}, to obtain the eventual existence and the consistency it is enough to prove $||\tilde{p}_n^{1/2}-p^{1/2}||_2\to 0$, a.s. and this is shown from the convergence of $\tilde{p}_{n}$ to $p$ in $l_1$ and the inequality $||\tilde{p}_n^{1/2}-p^{1/2}||_2^2\leq ||\tilde{p}_n-p||_1$.

The measurability of $\tilde{\theta}_n^{\rho}(\tilde{p}_{n})$ and $\tilde{\theta}_n^{HD}(\tilde{p}_{n})$ is obtained by  Corollary 2.1 in \citet*{Brown-Purves}.

\noindent{\bf Proof of Theorem \ref{thm:asymptotic-normality}~\emph{(i)}}

To prove \emph{(i)} we  adapt and extend  the proofs of Theorem 1 in \citet*{Basu-MNEDE} and of Theorem 3.4  in \citet*{Park-Basu-2004} developed for general continuous models. In order to facilitate the proof, we will assume that $P[Z_n\to\infty]=1$.

Let $\dot{\rho}(\hat{p}_n,\theta)$ and $\ddot{\rho}(\hat{p}_n,\theta)$ be the first and the second derivative of $\rho(\hat{p}_n,\theta)$ with respect to $\theta$. Since $\tilde{\theta}_n^\rho(\hat{p}_n)=\arg\min_{\theta\in\Theta}\rho(\hat{p}_n,\theta)$, from the Taylor series expansion of $\dot{\rho}(\hat{p}_n,\tilde{\theta}_n^\rho(\hat{p}_n))$ around $\theta_0$ one obtains
\begin{equation*}
    \Delta_{n-1}^{1/2}(\tilde{\theta}_n^\rho(\hat{p}_n)-\theta_0)= -\Delta_{n-1}^{1/2}\dot{\rho}(\hat{p}_n,\theta_0)\ddot{\rho}(\hat{p}_n,\theta_n^*)^{-1},
\end{equation*}
being $\theta_n^*$ a point between $\theta_0$ and $\tilde{\theta}_n^\rho(\hat{p}_n)$. Consequently, it is enough to prove
\vspace*{-0.5ex}\begin{eqnarray}
  \ddot{\rho}(\hat{p}_n,\theta_n^*) &\xrightarrow{P}& I(\theta_0),\label{eq:conv-denominator-distrib} \\
  -\Delta_{n-1}^{1/2}\dot{\rho}(\hat{p}_n,\theta_0) &\xrightarrow{d}&  N\left(0,I(\theta_0)\right).\label{eq:conv-numerator-distrib}
\end{eqnarray}\vspace*{-0.5ex}
Observe that
{\begin{eqnarray*}
 \dot{\rho}(\hat{p}_n,\theta) &=& -\sum_{k=0}^\infty p'_k(\theta) A(\delta(\hat{p}_n,\theta,k)), \\
 \ddot{\rho}(\hat{p}_n,\theta_n^*) &=& -\sum_{k=0}^\infty p''_k(\theta_n^*) A(\delta(\hat{p}_n,\theta_n^*,k))\\
 &\phantom{=}&+\sum_{k=0}^\infty A'(\delta(\hat{p}_n,\theta_n^*,k))(1+\delta(\hat{p}_n,\theta_n^*,k))u(\theta_n^*,k)^2 p_k(\theta_n^*).
\end{eqnarray*}}
On the one hand, $A(\delta(\hat{p}_n,\theta_n^*,k))\to 0$ a.s. and  $A'(\delta(\hat{p}_n,\theta_n^*,k))\to 1$ a.s.,  using the consistency of $\hat{p}_{n,k}$, for each $k\geq 0$. Therefore, applying the dominated convergence theorem, (A4) and (A5), one has
$$
\sum_{k=0}^\infty p''_k(\theta_0) A(\delta(\hat{p}_n,\theta_n^*,k)) \xrightarrow{P} 0,
$$
\begin{align*}
\sum_{k=0}^\infty A'(\delta(\hat{p}_n,\theta_n^*,k))(1+\delta(\hat{p}_n,\theta_n^*,k&))u(\theta_0,k)^2 p_k(\theta_0)\xrightarrow{P}\\
 &\sum_{k=0}^\infty u(\theta_0,k)^2p_k(\theta_0)=I(\theta_0).\label{eq:sum-2-a}
\end{align*}
Moreover, as $\theta_n^*$ converges to $\theta_0$ in probability, $A(\delta)$ and $A'(\delta)(1+\delta)$ are bounded,  \eqref{eq:cond-conver-prob1} and \eqref{eq:cond-conver-prob2},
\begin{eqnarray*}
 \sum_{k=0}^\infty p''_k(\theta_n^*) A(\delta(\hat{p}_n,\theta_n^*,k)) &\xrightarrow{P}& 0,\\
 \sum_{k=0}^\infty A'(\delta(\hat{p}_n,\theta_n^*,k))(1+\delta(\hat{p}_n,\theta_n^*,k))u(\theta_n^*,k)^2 p_k(\theta_n^*)&\xrightarrow{P}&I(\theta_0),
\end{eqnarray*}
hence, \eqref{eq:conv-denominator-distrib} yields.

In order to establish \eqref{eq:conv-numerator-distrib}, since
{\begin{align*}
    -\Delta_{n-1}^{1/2}\dot{\rho}(\hat{p}_n,\theta_0)&=\Delta_{n-1}^{1/2}\sum_{k=0}^\infty p'_k(\theta_0) \delta(\hat{p}_n,\theta_0,k)\\
    &\phantom{=}+\Delta_{n-1}^{1/2}\sum_{k=0}^\infty p'_k(\theta_0) [A(\delta(\hat{p}_n,\theta_0,k))-\delta(\hat{p}_n,\theta_0,k)],
\end{align*}}
it is sufficient to prove that
{\begin{eqnarray}
  \Delta_{n-1}^{1/2}\sum_{k=0}^\infty p'_k(\theta_0) \delta(\hat{p}_n,\theta_0,k) &\xrightarrow{d}& N(0,I(\theta_0)), \label{eq:conv-normal-distrib} \\
  \Delta_{n-1}^{1/2}\sum_{k=0}^\infty p'_k(\theta_0) [A(\delta(\hat{p}_n,\theta_0,k))-\delta(\hat{p}_n,\theta_0,k)] &\xrightarrow{P}& 0\label{eq:conv-sumat-numer}.
\end{eqnarray}}
Note that due to $\sum_{k=0}^\infty p'_k(\theta_0)=0$ and to (A4)~\emph{(a)}, then
$$\Delta_{n-1}^{1/2}\sum_{k=0}^\infty p'_k(\theta_0) \delta(\hat{p}_n,\theta_0,k)= \Delta_{n-1}^{-1/2}\sum_{i=0}^{n-1}\sum_{j=1}^{\phi_i(Z_i)}u(\theta_0,X_{ij}).$$ Thus
\begin{equation*}
   \Delta_{n-1}^{1/2}\sum_{k=0}^\infty p'_k(\theta_0) \delta(\hat{p}_n,\theta_0,k) \stackrel{d}{=}\Delta_{n-1}^{-1/2}\sum_{i=0}^{\Delta_{n-1}}u(\theta_0,X_{0i}),
\end{equation*}
where $\stackrel{d}{=}$ indicates \emph{equal in distribution}. Now, bearing in mind that $\Delta_n(\tau_m^{n+1} - 1)(\tau_m - 1)^{-1}\to\tau W$ a.s., with $W$ the limit variable introduced in (A3) (this property is deduced by applying Proposition 3.5 in \citet*{art-EM}), using a central limit theorem (see \citet*{D}), one has
\begin{equation*}
  \Delta_{n-1}^{-1/2}I(\theta_0)^{-1/2}\sum_{i=0}^{\Delta_{n-1}}u(\theta_0,X_{0i}) \xrightarrow{d} N(0,1).
\end{equation*}
Consequently \eqref{eq:conv-normal-distrib} holds.

Respect to \eqref{eq:conv-sumat-numer}, applying $|A(t^2-1)-(t^2-1)|\leq B(t-1)^2$ for some $B>0$ (see \citet*{lindsay}, p. 1107), one has
\begin{align}
    \Big|\Delta_{n-1}^{1/2}\sum_{k=0}^\infty &p'_k(\theta_0) [A(\delta(\hat{p}_n,\theta_0,k))-\delta(\hat{p}_n,\theta_0,k)]\Big|\nonumber\\
    &\leq B \Delta_{n-1}^{1/2} \sum_{k=0}^\infty |u(\theta_0,k)| \Big(\hat{p}_{n,k}^{1/2}-p_k(\theta_0)^{1/2}\Big)^2\nonumber\\
    &=B \Delta_{n-1}^{-1/2}\tau_m^{n/2}\sum_{k=0}^\infty A_{n,k}^2,\label{eq:cond-fin}
\end{align}
being
%
{\begin{eqnarray}
  A_{n,k}^2 &=& \tau_ m^{-n/2} \Delta_{n-1} |u(\theta_0,k)| \left(\hat{p}_{n,k}^{1/2}-p_k(\theta_0)^{1/2}\right)^2\nonumber\\
   &=& 2\tau_m^{-n/2} |s_k'(\theta_0)| \Delta_{n-1} s_k(\theta_0)^{-1}\big(\hat{p}_{n,k}^{1/2}-p_k(\theta_0)^{1/2}\big)^2\hspace*{-0.5em},\label{eq:A-kn}
\end{eqnarray}}
with $s(\theta)=p(\theta)^{1/2}$.
Let us demonstrate $\sum_{k=0}^\infty A_{n,k}^2=o_P(1)$\footnote{We write $X_n = o_P (Y_n)$ to mean $P[|X_n|>\epsilon |Y_n|]\to 0$, as $n\to\infty$, for each $\epsilon>0$.}. To this end, we  prove $\lim_{n\to\infty}\sum_{k=0}^\infty E[A_{n,k}^2]=0$. First, taking into account that in \citet*{art-EM} it was proved that
\begin{equation*}\label{eq:asympt-normal-pk}
(p_k(1-p_k))^{-1/2}\Delta_{n-1}^{1/2}(\widehat{p}_{n,k}-p_k)\stackrel{d}{\rightarrow} N(0,1),
\end{equation*}
\vspace*{-1ex}
one has $A_{n,k}^2=o_P(1)$.

\vspace*{1ex}

\noindent Second, denoting $V_i(k)=\sum_{j=1}^{\phi_{i-1}(Z_{i-1})} (I_{\{X_{i-1j}=k\}}-p_k(\theta_0))$ and from the facts that $E[V_i(k)]=0$, $Var [V_i(k)]=p_k(\theta_0)(1-p_k(\theta_0))E[\varepsilon(Z_{i-1})]$ and $\tau_m^{-n}\sum_{i=1}^n\varepsilon(Z_{i-1})$ is a bounded sequence, one obtains that there exists some $K_0\in\R$ such that
\vspace*{-1ex}\begin{eqnarray*}
  E[A_{n,k}^4] &\leq & 4 \tau_m^{-n} |s_k'(\theta_0)|^2 p_k(\theta_0)^{-1} E\bigg[\bigg(\sum_{i=1}^n V_i(k)\bigg)^2\bigg]  \\
   &=& 4 |s_k'(\theta_0)|^2  (1-p_k(\theta_0))  E\bigg[\frac{\sum_{i=1}^n\varepsilon(Z_{i-1})}{\tau_m^{-n}}\bigg]
   \leq K_0 |s_k'(\theta_0)|^2 (1-p_k(\theta_0)).
\end{eqnarray*}\vspace*{-1ex}

Hence, $\sup_n E[A_{n,k}^4]<\infty$ and $\{A_{n,k}\}_{n\in\N}$ is uniformly integrable for each $k\geq 1$. Therefore, since $A_{n,k}^2=o_P(1)$ as $n\to\infty$, we have $A_{n,k}^2$ converges to $0$ in $l_1$, as $n\to\infty$. Moreover, $\sum_{k=0}^\infty E[A_{n,k}^2]\leq K_0^{1/2}\sum_{k=0}^\infty  |s_k'(\theta_0)|<\infty$, as a consequence, by the dominated convergence theorem, $\lim_{n\to\infty}\sum_{k=0}^\infty E[A_{n,k}^2]=0$.

Now, from \eqref{eq:cond-fin}, since $\tau_m^n\Delta_{n-1}^{-1}\to (\tau_m-1)^{-1}\tau W$ a.s., one has \eqref{eq:conv-sumat-numer}.

\noindent{\bf Proof of Theorem \ref{thm:asymptotic-normality}~\emph{(ii)}}

In order to prove Theorem \ref{thm:asymptotic-normality}~\emph{(ii)}, we will make use of a previous result:

\begin{lemma}\label{lema:previous-asymptotic-normality}
Let be a CBP satisfying (A3) and (A6), with $p=p(\theta_0)$ its offspring distribution. Assume (A2), $p\in\hat\Gamma_{HD}$ and $\theta_0\in int(\Theta)$; then
 $$\Delta_{n-1}^{-1/2}\sum_{l=1}^n\sum_{j=1}^{\phi_{l-1}(Z_{l-1})}s_{X_{l-1j}}'(\theta_0)p_{X_{l-1j}}^{-1/2}\stackrel{d}{\rightarrow}  N\left(0,||s'(\theta_0)||_2^2\right),$$
 with respect to the distribution $P[\cdot| Z_n\to\infty]$.
 \end{lemma}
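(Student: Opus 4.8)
The plan is to reproduce, under the $l_2$-differentiability hypothesis (A6) in place of (A4)--(A5), the same randomly indexed central limit argument that established \eqref{eq:conv-normal-distrib} in the proof of Theorem \ref{thm:asymptotic-normality}~\emph{(i)}. Writing $g(k)=s_k'(\theta_0)p_k^{-1/2}$ and reindexing $i=l-1$, the object to be studied is $\Delta_{n-1}^{-1/2}\sum_{i=0}^{n-1}\sum_{j=1}^{\phi_i(Z_i)}g(X_{ij})$, a normalized sum of the i.i.d. summands $g(X_{ij})$ over all reproduction events up to generation $n-1$. Since $g(k)=\frac{1}{2}u(\theta_0,k)$ and $\|s'(\theta_0)\|_2^2=\frac{1}{4}I(\theta_0)$, this is, up to the factor $1/2$, the quantity appearing in \eqref{eq:conv-normal-distrib}; the point of the lemma is to obtain the limit from (A6) alone, without the term-by-term control that (A4) provided there.

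First I would identify the first two moments of a single summand. Since the $X_{ij}$ have law $p=p(\theta_0)$,
$$E[g(X_{ij})]=\sum_{k=0}^\infty s_k'(\theta_0)p_k^{-1/2}p_k=\sum_{k=0}^\infty s_k'(\theta_0)s_k(\theta_0),\qquad E[g(X_{ij})^2]=\sum_{k=0}^\infty s_k'(\theta_0)^2=\|s'(\theta_0)\|_2^2.$$
The second quantity is finite because $s'(\theta_0)\in l_2$ by (A6). For the mean I would invoke (A6) directly: from $\|s(\theta_0+\beta)\|_2^2=1$ and $s_k(\theta_0+\beta)=s_k(\theta_0)+\beta s_k'(\theta_0)+\beta v_k(\beta)$ with $\|v(\beta)\|_2\to 0$, expanding the squared norm yields $1=1+2\beta\langle s(\theta_0),s'(\theta_0)\rangle+o(\beta)$, whence $\langle s(\theta_0),s'(\theta_0)\rangle=\sum_{k} s_k'(\theta_0)s_k(\theta_0)=0$. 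Thus each summand is centred with variance $\|s'(\theta_0)\|_2^2$; this is exactly where (A6) supplies, in $l_2$ form, the two ingredients ($\sum_k p_k'(\theta_0)=0$ and $I(\theta_0)<\infty$) that (A4) furnished in the proof of \eqref{eq:conv-normal-distrib}.

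Next, exactly as in that proof, I would use that the offspring variables $\{X_{ij}\}$ are i.i.d. and independent of the control family $\{\phi_i\}$ to pass to a randomly indexed i.i.d. sum,
$$\Delta_{n-1}^{-1/2}\sum_{i=0}^{n-1}\sum_{j=1}^{\phi_i(Z_i)}g(X_{ij})\stackrel{d}{=}\Delta_{n-1}^{-1/2}\sum_{i=1}^{\Delta_{n-1}}g(X_{0i}),$$
the random number of terms being $\Delta_{n-1}=\sum_{i=0}^{n-1}\phi_i(Z_i)$. On $\{Z_n\to\infty\}$, condition (A3) together with Proposition 3.5 in \citet*{art-EM} gives that $\tau_m^{-n}\Delta_{n-1}$ converges a.s. to a positive multiple of $W$ (with $W>0$ on $\{Z_n\to\infty\}$ by (A3)), so $\Delta_{n-1}\to\infty$ a.s. and its growth is regular in the sense required by the central limit theorem for random sums. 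Applying that theorem (\citet*{D}) to the centred, finite-variance i.i.d. sequence $\{g(X_{0i})\}$ gives $\Delta_{n-1}^{-1/2}\sum_{i=1}^{\Delta_{n-1}}g(X_{0i})\stackrel{d}{\rightarrow}N(0,\|s'(\theta_0)\|_2^2)$ with respect to $P[\cdot\,|\,Z_n\to\infty]$, which is the assertion.

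Setting aside these moment computations, the genuinely new point — and the main thing to get right — is the pair of moment identities under (A6): that the weaker $l_2$-differentiability still forces the summand to be centred (via $\langle s(\theta_0),s'(\theta_0)\rangle=0$) and to have finite variance $\|s'(\theta_0)\|_2^2$, since (A4), which justified these in the proof of Theorem \ref{thm:asymptotic-normality}~\emph{(i)}, is unavailable for the Hellinger case. Once these are secured, the reduction to a randomly indexed i.i.d. sum and the a.s. asymptotics of $\Delta_{n-1}$ carry over from the argument for \eqref{eq:conv-normal-distrib} essentially verbatim, the conditioning on $\{Z_n\to\infty\}$ being what guarantees $\Delta_{n-1}\to\infty$ so that the random-index CLT applies.
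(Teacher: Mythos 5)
Your proof is correct, but it follows a genuinely different route from the paper's own proof of Lemma \ref{lema:previous-asymptotic-normality}. The paper stays inside the process's natural filtration: it sets $\beta_l=\sum_{j=1}^{\phi_{l-1}(Z_{l-1})}s_{X_{l-1j}}'(\theta_0)p_{X_{l-1j}}^{-1/2}$, observes that $\{\beta_l,\G_l\}$ is a martingale difference sequence with conditional variance $E[\beta_l^2\mid\G_{l-1}]=\varepsilon(Z_{l-1})\|s'(\theta_0)\|_2^2$ as in \eqref{eq:var-beta-l}, decomposes $\Delta_{n-1}^{-1/2}\sum_{l=1}^n\beta_l$ using the a.s.\ limits of $Y_{n-1}/\Delta_{n-1}$ and $\tau_m^{-n}Y_{n-1}$, disposes of one term by Cauchy--Schwarz, and proves normality of the dominant weighted sum via the characteristic-function computations of \citet{Heyde-Brown-71} and \citet{Sriram}, closing the double limit with Theorem 25.5 of \citet{Bi-79}. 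You instead transplant the argument the paper used for \eqref{eq:conv-normal-distrib} in Theorem \ref{thm:asymptotic-normality}~\emph{(i)}: reindex the tree sum as a randomly stopped sum of i.i.d.\ variables and invoke the random-index central limit theorem of \citet{D}, which requires only that $\Delta_{n-1}\tau_m^{-n}$ converge to an a.s.\ positive limit (Proposition 3.5 of \citet{art-EM} under (A3), with $P[Z_n\to\infty]=1$ assumed for simplicity, exactly as in both of the paper's proofs), not independence of the index from the summands. Your route is shorter, and its genuinely new ingredient is the right one: deriving from (A6) alone that the summands are centred, via $\sum_{k}s_k(\theta_0)s_k'(\theta_0)=0$ obtained by expanding $\|s(\theta_0+\beta)\|_2^2=1$ (this is where $\theta_0\in int(\Theta)$ enters), with finite variance $\|s'(\theta_0)\|_2^2$. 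It is worth noting that this orthogonality is tacitly required by the paper's proof as well: without it $\{\beta_l,\G_l\}$ would not be a martingale difference, and the conditional second moment in \eqref{eq:var-beta-l} would acquire an extra term involving $\sigma^2(Z_{l-1})$; so your proposal makes explicit a fact the paper uses silently. What the paper's martingale approach buys is that it avoids the reindexing/equality-in-distribution step and follows the template of \citet{Sriram-2000} and \citet{Sriram}, a form that adapts when the conditional variance structure matters in its own right; what yours buys is economy and uniformity with the proof of part \emph{(i)}.
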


\begin{proof}
To simplify the proof, we assume that $P[Z_n\to\infty]=1$. Let $\beta_l=\sum_{j=1}^{\phi_{l-1}(Z_{l-1})}s_{X_{l-1j}}'(\theta_0)p_{X_{l-1j}}^{-1/2}$ and $\mathcal{G}_l=\sigma(X_{ij},\phi_i(k): j\geq 1, k\geq 0,i=0,\ldots,l-1)$; then $\{\beta_l,\mathcal{G}_l\}_{l\geq 0}$ is a martingale difference and
\begin{equation}\label{eq:var-beta-l}
    E[\beta_l^2|\mathcal{G}_{l-1}]=\varepsilon(Z_{l-1})||s'(\theta_0)||_2^2,\qquad \mbox{ a.s.}
\end{equation}

Moreover,
{\small\begin{align}\label{eq:asympt-distrib-mu}
 \Delta_{n-1}^{-1/2}\sum_{l=1}^n \beta_l &= \left(\frac{Y_{n-1}}{\Delta_{n-1}}\right)^{1/2} \left(\frac{\tau_m^n}{Y_{n-1}}\right)^{1/2} \Bigg[\frac{1}{\tau_ m^{n/2}}= \sum_{l=1}^n \frac{\left((\varepsilon(Z_{l-1})+1)^{1/2}-(\tau_m^{l-1}\tau W)^{1/2}\right)\beta_l}{(\varepsilon(Z_{l-1})+1)^{1/2}}\nonumber \\
&\phantom{=}+\left(\frac{\tau W}{\tau_m -1}\right)^{1/2}(\tau_m -1)\sum_{l=1}^n  \frac{\tau_m^{-(n-l+1)/2}\beta_l}{(\varepsilon(Z_{l-1})+1)^{1/2}}\Bigg],\nonumber
\end{align}}
with $Y_l=\sum_{i=0}^{l}Z_i$, $l=0,\ldots,n$. Because of $\tau_m^{-n}Y_{n-1}\to (\tau_m -1)^{-1} W$ a.s. and $\Delta_{n-1}^{-1}Y_{n-1}\to \tau^{-1}$ a.s. (again deduced by Proposition 3.5 in \citet*{art-EM}), it is enough to prove
\begin{equation}
   \sum_{l=1}^n\frac{\left((\varepsilon(Z_{l-1})+1)^{1/2}-(\tau_m^{l-1}\tau W)^{1/2}\right)\beta_l}{(\varepsilon(Z_{l-1})+1)^{1/2}}
   =o_P (\tau_m^{n/2}),\label{eq:t1-asympt-distrib}
\end{equation}
that is, $\tau_m^{-n/2}\sum_{l=1}^n \left((\varepsilon(Z_{l-1})+1)^{1/2}-(\tau_m^{l-1}\tau W)^{1/2}\right)\beta_l(\varepsilon(Z_{l-1})+1)^{-1/2}$ converges to 0 in probability, and
\begin{equation}\label{eq:t2-asympt-distrib}
  (\tau_m -1)\sum_{l=1}^n  \frac{\tau_m^{-(n-l+1)/2}\beta_l}{(\varepsilon(Z_{l-1})+1)^{1/2}}\xrightarrow{d} N\left(0,||s'(\theta_0)||_2^2\right),
\end{equation}
as $n\to\infty$. The proof follows similar steps to those given in Theorem 2 in \citet*{Sriram}. For \eqref{eq:t1-asympt-distrib}, using Cauchy-Schwarz inequality,
\begin{align*}
\sum_{l=1}^n \left((\varepsilon(Z_{l-1})+1)^{1/2}-(\tau_m^{l-1}\tau W)^{1/2}\right)&\beta_l(\varepsilon(Z_{l-1})+1)^{-1/2}\leq A_n^{1/2}B_n^{1/2},
\end{align*}
%
with
{\begin{eqnarray*}
  A_n &=& \sum_{l=1}^n \tau_m^{(l-1)/2}\left(((\varepsilon(Z_{l-1})+1)\tau_m^{-(l-1)})^{1/2}-(\tau W)^{1/2}\right)^2, \\
  B_n &=& \sum_{l=1}^n \tau_m^{(l-1)/2}(\varepsilon(Z_{l-1})+1)^{-1}\beta_l^2.
\end{eqnarray*}}

On the one hand, since $(\varepsilon(Z_{l-1})+1)\tau_m^{-(l-1)})^{1/2}\to (\tau W)^{1/2}$ a.s., one has $A_n=o(\sum_{l=1}^n \tau_m^{(l-1)/2})=o(\tau_m^{n/2})$.

Moreover, from \eqref{eq:var-beta-l}, $E[B_n]=O(\tau_m^{n/2})$. As a consequence, $A_n=o(\tau_m^{n/2})$ and $B_n=O_P(\tau_m^{n/2})$\footnote{We write $X_n = O_P (Y_n)$ to mean $P[|X_n|>\epsilon |Y_n|]\to M$, as $n\to\infty$, for certain constant $M$ for each $\epsilon>0$.}, and hence, \eqref{eq:t1-asympt-distrib} is proved.

To obtain \eqref{eq:t2-asympt-distrib}, we define $\gamma_{nj}=\beta_{n-j+1}(\varepsilon(Z_{n-j})+1)^{-1/2}$, $j=1,\ldots,n$; then
\begin{align*}\label{eq:t3-asympt-distrib-mu}
   (\tau_m -1)^{1/2}&\sum_{l=1}^n  \frac{\tau_m^{-(n-l+1)/2}\beta_l}{(\varepsilon(Z_{l-1})+1)^{1/2}}
   =(\tau_m -1)^{1/2}\sum_{j=1}^n\frac{\tau_m^{-j/2} \beta_{n-j+1}}{(\varepsilon(Z_{n-j})+1)^{1/2}}\nonumber \\
   &= U_{Jn} + (\tau_m -1)^{1/2}\sum_{j=J+1}^n \tau_m^{-j/2} \gamma_{nj}=U_{nn},
\end{align*}
with $U_{Jn}=(\tau_m -1)^{1/2}\sum_{j=1}^J \tau_m^{-j/2} \gamma_{nj}$, $J=1,\ldots,n$.

For $J\geq 1$ and given $(t_1,\ldots,t_J)\in\R^J$, using analogous arguments to those given in the proof of Theorem 1 in \citet*{Heyde-Brown-71}, we prove that, as $n\to\infty$.
\begin{align*}
E\Big[\exp\Big(i\sum_{j=1}^J t_j \tau_m^{-j/2}\gamma_{nj}\Big)\Big]\rightarrow &\exp\Big(-\frac{1}{2}||s'(\theta_0)||_2^2\sum_{j=1}^J t_j^2 \tau_m^{-j}\Big).
\end{align*}

 As a consequence, following again the steps of the Theorem 2 in \citet*{Sriram}, one can verify that of $U_{Jn}\xrightarrow{d} U_J$, with $U_J$ following a $N(0,(\tau_m -1)||s'(\theta)||_2^2 \sum_{j=1}^J \tau_m^{-j})$. Moreover, for every $n\geq 0$ and $\epsilon>0$, one has $P\left[|U_{Jn}-U_{nn}|>\epsilon\right] \leq \epsilon^{-2}(\tau_m-1)||s'(\theta)||_2^2\sum_{j=J+1}^\infty \tau_m^{-j}$. As a result, there exists a constant $k_0$ such that
$$\limsup_{n\to\infty} P\left[|U_{Jn}-U_{nn}|>\epsilon\right]\leq k_0 \sum_{j=J+1}^\infty \tau_m^{-j}\to 0, \text{ as } J\to\infty.$$

Finally, from Theorem 25.5 in \citet*{Bi-79} and the fact that $U_J\xrightarrow{d} N(0,||s'(\theta)||_2^2)$, as $J\to\infty$, it is verified $U_{nn}\xrightarrow{d} N\left(0,||s'(\theta_0)||_2^2\right)$, as $n\to\infty$, and hence \eqref{eq:t2-asympt-distrib} is obtained.
\end{proof}

\vspace*{3ex}

Once Lemma \ref{lema:previous-asymptotic-normality} is proved, the proof of Theorem \ref{thm:asymptotic-normality}~\emph{(ii)} is analogous to the proof of Theorem 3.4 in \citet*{Sriram-2000}. Recall that $T^{HD}(\hat{p}_n)=\tilde{\theta}_n^{HD}(\hat{p}_n)$ eventually exists, so applying Theorem 3.3 of \citet*{Sriram-2000},
\begin{eqnarray}\label{eq:expr-MHED-asympt}
\Delta_{n-1}^{1/2}(\tilde{\theta}_n^{HD}(\hat{p}_n)&-&\theta_0)=\Delta_{n-1}^{1/2}\big[a_n\sum_{k=0}^\infty s_k'(\theta_0)(\hat{p}_{n,k}^{1/2}-p_k(\theta_0)^{1/2})\nonumber\\
&-&||s'(\theta_0)||_2^{-2}\sum_{k=0}^\infty s_k'(\theta_0)(\hat{p}_{n,k}^{1/2}-p_k(\theta_0)^{1/2})\big]
\end{eqnarray}
where $a_n\to 0$. To obtain the distribution of \eqref{eq:expr-MHED-asympt}, one has to determine the distribution of $\sum_{k=0}^\infty s_k'(\theta_0)(\hat{p}_{n,k}^{1/2}-p_k(\theta_0)^{1/2})$, which, due to $\theta_0=\arg \min _{\theta\in\Theta} ||p_k(\theta)^{1/2}\linebreak-p_k(\theta_0)^{1/2}||_2$, verifies $\sum_{k=0}^\infty s_k'(\theta_0)(\hat{p}_{n,k}^{1/2}-p_k(\theta_0)^{1/2})=\sum_{k=0}^\infty s_k'(\theta_0)\hat{p}_{n,k}^{1/2}$. From the fact that
\begin{eqnarray*}
\sum_{k=0}^\infty s_k'(\theta_0)\hat{p}_{n,k}^{1/2} &=& \frac{1}{2}\sum_{k=0}^\infty s_k'(\theta_0)p_k(\theta_0)^{-1/2}\hat{p}_{n,k}^{1/2}\\
&\phantom{=}&-\frac{1}{2}\sum_{k=0}^\infty s_k'(\theta_0)p_k(\theta_0)^{-1/2}(\hat{p}_{n,k}^{1/2}-p_k(\theta_0)^{1/2})^2\hspace*{-0.25em},
\end{eqnarray*}
and $I(\theta_0)=4||s'(\theta_0)||_2^{2}$, it suffices
\begin{eqnarray}
  \Delta_{n-1}^{1/2}&\sum_{k=0}^\infty& s_k'(\theta_0)p_k(\theta_0)^{-1/2}\hat{p}_{n,k}^{1/2}  \stackrel{d}{\rightarrow} N\left(0,||s'(\theta_0)||_2^{2}\right),\label{eq:expr-MHDE-asympt-t1}\\
  \Delta_{n-1}^{1/2}&\sum_{k=0}^\infty& s_k'(\theta_0)p_k(\theta_0)^{-1/2}(\hat{p}_{n,k}^{1/2}-p_k(\theta_0)^{1/2})^2 \stackrel{P}{\rightarrow} 0.\label{eq:expr-MHDE-asympt-t2}
\end{eqnarray}

Now, from $\Delta_{n-1}^{1/2}\sum_{k=0}^\infty s_k'(\theta_0)p_k(\theta_0)^{-1/2}\hat{p}_{n,k}^{1/2}=\Delta_{n-1}^{-1/2}\sum_{l=1}^n\sum_{j=1}^{\phi_{l-1}(Z_{l-1})}\linebreak s_{X_{l-1j}}'(\theta_0)p_{X_{l-1j}}(\theta_0)^{-1/2}$ and Lemma \ref{lema:previous-asymptotic-normality}, one has \eqref{eq:expr-MHDE-asympt-t1}.


For each $k, n\geq 0$, we define $C_{n,k}=2^{-1}A_{n,k}$, being $A_{n,k}^2$ the random variables introduced in \eqref{eq:A-kn}. Following the same arguments as in the proof of \emph{(i)}, one establishes \eqref{eq:expr-MHDE-asympt-t2} and this completes the proof.

\noindent{\bf Proof of Theorem \ref{thm:influence-curves}}\label{ape:influence-curves}

\emph{(i)} \emph{(i-a)} Let $\theta_L=T(p(\theta,\alpha,L))$. If the sequence $\{\theta_L\}_{L\geq 0}$ does not converge to $\theta$, as $L\to\infty$, then there will exist a subsequence, which we continue denoting $\{\theta_L\}_{L\geq 0}$, such that $\theta_L\to\theta_1\neq\theta$. From the definition of $\theta_L$,
\begin{equation}\label{eq:disparity-thetaL}
    \rho(p(\theta,\alpha,L),\theta_L)\leq \rho(p(\theta,\alpha,L),t),\quad \forall t\in\Theta,
\end{equation}
follows; moreover, applying a generalization of the dominated convergence theorem (see \citet*{Royden}, p.92), one has
\begin{equation}\label{eq:lim-disparity-thetaL}
    \rho(p(\theta,\alpha,L),\theta_L) \to \rho((1-\alpha)p(\theta),\theta_1),\quad \text{ as } L\to\infty;
\end{equation}
as a consequence, from \eqref{eq:disparity-thetaL} and \eqref{eq:lim-disparity-thetaL},
\begin{equation}\label{eq:inequal-lim-disparity-thetaL}
    \rho((1-\alpha)p(\theta),\theta_1)\leq \rho((1-\alpha)p(\theta),t),\quad \forall t\in\Theta.
\end{equation}

On the one hand, since $\rho^*(\alpha,p(\theta),t)=0$ if and only if $t=\theta$, $\rho^*(\alpha,p(\theta),\theta_1)>0=\rho^*(\alpha,p(\theta),\theta)$. On the other hand, due to $\rho((1-\alpha)p(\theta),t)$ is an increasing function of $\rho^*(\alpha,p(\theta),t)$, one obtains $\rho((1-\alpha)p(\theta),\theta_1)>\rho((1-\alpha)p(\theta),\theta)$, which contradicts \eqref{eq:inequal-lim-disparity-thetaL}.

\emph{(ii-b)} The continuity of the function $L\mapsto T(p(\theta,\alpha,L))$ is immediate and the boundedness of the sequence $\{\theta_L\}_{L\geq 0}$ is deduced from its convergence.

\emph{(ii-c)} By definition of $\theta_L$, from the Taylor series expansion of $\dot{\rho} (p(\theta,\alpha,L),\theta_L)$ around $\theta$ one has
\begin{equation*}
    \frac{\theta_L-\theta}{\alpha}=-\frac{\alpha^{-1}\dot{\rho} (p(\theta,\alpha,L),\theta)}{\ddot{\rho}(p(\theta,\alpha,L),\theta_L^*)},
\end{equation*}
with $\theta_L^*$ a point between $\theta$ and $\theta_L$. Consequently, it will be sufficient to prove
\vspace*{-0.5ex}\begin{eqnarray}
  \lim_{\alpha\to 0} \alpha^{-1}\dot{\rho} (p(\theta,\alpha,L),\theta) &=& -u(\theta,L),\label{eq:lim-numer-thetaL}  \\
  \lim_{\alpha\to 0}\ddot{\rho}(p(\theta,\alpha,L),\theta_L^*) &=& I(\theta).\label{eq:lim-denom-thetaL}
\end{eqnarray}\vspace*{-0.5ex}
\noindent With the same arguments as Theorem \ref{thm:asymptotic-normality}~\emph{(i)}, one can prove
\begin{eqnarray*}
 \dot{\rho}(p(\theta,\alpha,L),\theta) &=& -\sum_{k=0}^\infty p'_k(\theta) A(\delta(p(\theta,\alpha,L),\theta,k)), \\
 \ddot{\rho}(p(\theta,\alpha,L),\theta_L^*) &=& -\sum_{k=0}^\infty p''_k(\theta_L^*) A(\delta(p(\theta,\alpha,L),\theta_L^*,k)) \\
 &\phantom{=}&+\sum_{k=0}^\infty A'(\delta(p(\theta,\alpha,L),\theta_L^*,k))u(\theta_L^*,k)^2 p_k(\theta_L^*)\\
 &\phantom{=}&\cdot\ (1+\delta(p(\theta,\alpha,L),\theta_L^*,k)).
\end{eqnarray*}

\noindent Using L'H\^opital's rule and the fact that $\sum_{k=0}^\infty p'_k(\theta)=0$, \eqref{eq:lim-numer-thetaL} is obtained. To show \eqref{eq:lim-denom-thetaL}, we use the convergence dominated theorem in the above expression.

\noindent \emph{(ii)} The proof follows similar steps to Theorem 7 in \citet*{Beran-77} and it is omitted.

\vskip .65cm
\noindent
Department of Mathematics and Instituto de Computaci\'on Cient\'{\i}fica Avanzada (ICCAEx). University of Extremadura, Avda. de Elvas s/n, 06006 Badajoz, Spain
\vskip 2pt
\noindent
E-mail: mvelasco@unex.es
\vskip 2pt
\noindent
Department of Mathematics and Instituto de Computaci\'on Cient\'{\i}fica Avanzada (ICCAEx). University of Extremadura, Avda. de Elvas s/n, 06006 Badajoz, Spain
\vskip 2pt
\noindent
E-mail: cminuesaa@unex.es
\vskip 2pt
\noindent
Department of Mathematics and Instituto de Computaci\'on Cient\'{\i}fica Avanzada (ICCAEx). University of Extremadura, Avda. de Elvas s/n, 06006 Badajoz, Spain
\vskip 2pt
\noindent
E-mail: idelpuerto@unex.es


\begin{thebibliography}{38}
\providecommand{\natexlab}[1]{#1}
\providecommand{\url}[1]{\texttt{#1}}
\expandafter\ifx\csname urlstyle\endcsname\relax
  \providecommand{\doi}[1]{doi: #1}\else
  \providecommand{\doi}{doi: \begingroup \urlstyle{rm}\Url}\fi

\bibitem[Basu et~al.(1997)Basu, Sarkar, and Vidyashankar]{Basu-MNEDE}
A.~Basu, S.~Sarkar, and A.~N. Vidyashankar.
\newblock Minimum negative exponential disparity estimation in parametric
  models.
\newblock \emph{Journal of Statistical Planning and Inference}, 58:\penalty0
  349--370, 1997.

\bibitem[Basu et~al.(2011)Basu, Shioya, and Park]{Basu-2011}
A.~Basu, H.~Shioya, and C.~Park.
\newblock \emph{Statistical Inference: The Minimum Distance Approach.}
\newblock Chapman $\&$ Hall/CRC, 2011.

\bibitem[Beran(1977)]{Beran-77}
R.~Beran.
\newblock Minimum {H}ellinger distance estimates for parametric models.
\newblock \emph{Annals of Statistics}, 5\penalty0 (3):\penalty0 445--463, 1977.

\bibitem[Bercu(1999)]{Bercu}
B.~Bercu.
\newblock Weighted estimation and tracking for
  {B}ienaym\`{e}--{G}alton--{W}atson processes with adaptive control.
\newblock \emph{Statistics \& Probability Letters}, 42:\penalty0 415--421,
  1999.

\bibitem[Billingsley(1979)]{Bi-79}
P.~Billingsley.
\newblock \emph{Probability and Measure.}
\newblock Wiley and Sons, Inc., 1979.

\bibitem[Brown and Purves(1973)]{Brown-Purves}
L.~D. Brown and R.~Purves.
\newblock Measurable selections of extrema.
\newblock \emph{The Annals of Statistics}, 1\penalty0 (5):\penalty0 902--912,
  1973.

\bibitem[Bruss and Slavtchova-Bojkova(1999)]{art-Bruss}
F.~Bruss and M.~Slavtchova-Bojkova.
\newblock On waiting times to populate an environment and a question of
  statistical inference.
\newblock \emph{Journal of Applied Probability}, 36:\penalty0 261--267, 1999.

\bibitem[Cressie and Read(1984)]{cressie-84}
N.~Cressie and T.~R.~C. Read.
\newblock Multinomial goodness-of-fit tests.
\newblock \emph{Journal of the Royal Statistical Society, Series B},
  46\penalty0 (3):\penalty0 440--464, 1984.

\bibitem[del Puerto and Yanev(2008)]{libro-ejem}
I.~del Puerto and N.~Yanev.
\newblock \emph{Leading-Edge Applied Mathematical Modeling Research.}, chapter
  11: Branching Processes with Multitype Random Control Functions: subcritical
  case, pages 363--374.
\newblock Nova Science Publishers, Inc, 2008.

\bibitem[Devroye(1998)]{Devroye}
L.~Devroye.
\newblock \emph{Probabilistic Methods for Algorithmic Discrete Mathematics},
  volume~16 of \emph{Algorithms and Combinatorics}, chapter Branching Processes
  and Their Applications in the Analysis of Tree Structures and Tree
  Algorithms.
\newblock Springer-Verlag, 1998.

\bibitem[Devroye and {Gy\"{o}rfi}(1985)]{Devroye-Gyorfi}
L.~Devroye and L.~{Gy\"{o}rfi}.
\newblock \emph{Nonparametric Density Estimation: the $L_1$ View}.
\newblock John Willey \& Sons, 1985.

\bibitem[Dion(1974)]{D}
J.~P. Dion.
\newblock Estimation of the mean and the initial probabilities of a branching
  process.
\newblock \emph{Journal of Applied Probability}, 11:\penalty0 687--694, 1974.

\bibitem[Dion and Essebbar(1995)]{de}
J.~P. Dion and B.~Essebbar.
\newblock On the statistics of controlled branching processes.
\newblock \emph{Lecture Notes in Statistics}, 99:\penalty0 14--21, 1995.

\bibitem[Gonz\'alez et~al.(2004)Gonz\'alez, Mart\'inez, and del
  Puerto]{art-2004b}
M.~Gonz\'alez, R.~Mart\'inez, and I.~del Puerto.
\newblock Nonparametric estimation of the offspring distribution and mean for a
  controlled branching process.
\newblock \emph{Test}, 13:\penalty0 465--479, 2004.

\bibitem[Gonz\'alez et~al.(2005{\natexlab{a}})Gonz\'alez, Mart\'inez, and del
  Puerto]{art-2005b}
M.~Gonz\'alez, R.~Mart\'inez, and I.~del Puerto.
\newblock Estimation of the variance for a controlled branching process.
\newblock \emph{Test}, 14:\penalty0 199--213, 2005{\natexlab{a}}.

\bibitem[Gonz\'alez et~al.(2005{\natexlab{b}})Gonz\'alez, Molina, and del
  Puerto]{art-2005a}
M.~Gonz\'alez, M.~Molina, and I.~del Puerto.
\newblock Asymptotic behaviour of the critical controlled branching processes
  with random control functions.
\newblock \emph{Journal of Applied Probability}, 42:\penalty0 463--477,
  2005{\natexlab{b}}.

\bibitem[Gonz\'alez et~al.(2010)Gonz\'alez, Mart\'inez, Molina, Mota, Puerto,
  and Ramos]{procedings-workshop}
M.~Gonz\'alez, R.~Mart\'inez, M.~Molina, M.~Mota, I.~Puerto, and A.~Ramos.
\newblock \emph{Workshop on Branching Processes and Their Applications}, volume
  197 of \emph{Lecture Notes in Statistics - Proceedings}.
\newblock 2010.

\bibitem[Gonz\'alez et~al.(2013)Gonz\'alez, Guti\'errez, Mart\'inez, and del
  Puerto]{art-ABC}
M.~Gonz\'alez, C.~Guti\'errez, R.~Mart\'inez, and I.~del Puerto.
\newblock Bayesian inference for controlled branching processes through {MCMC}
  and {ABC} methodologies.
\newblock \emph{Revista de la Real Academia de Ciencias Exactas, F\'isicas y
  Naturales. Serie A. Matem\'aticas}, 107\penalty0 (2):\penalty0 459--473,
  2013.

\bibitem[Gonz\'alez et~al.(2016)Gonz\'alez, Minuesa, and del Puerto]{art-EM}
M.~Gonz\'alez, C.~Minuesa, and I.~del Puerto.
\newblock Maximum likelihood estimation and {E}xpectation-{M}aximization
  algorithm for controlled branching process.
\newblock \emph{Computational Statistics and Data Analysis}, 93:\penalty0
  209--227, 2016.

\bibitem[Haccou et~al.(2005)Haccou, Jagers, and Vatutin]{Haccou}
P.~Haccou, P.~Jagers, and V.~Vatutin.
\newblock \emph{Branching processes: variation, growth and extinction of
  populations.}
\newblock Cambridge University Press., 2005.

\bibitem[Heyde and Brown(1971)]{Heyde-Brown-71}
C.~C. Heyde and B.~M. Brown.
\newblock An invariance principle and some convergence rate results for
  branching processes.
\newblock \emph{Zeitschrift für Wahrscheinlichkeitstheorie und Verwandte
  Gebiete}, 20\penalty0 (4):\penalty0 271--278, 1971.

\bibitem[Kimmel and Axelrod(2015)]{Kimmel}
M.~Kimmel and D.~E. Axelrod.
\newblock \emph{Branching processes in biology.}
\newblock Springer-Verlag, 2015.

\bibitem[Lindsay(1994)]{lindsay}
B.~G. Lindsay.
\newblock Efficiency versus robustness: The case for minimum {H}ellinger
  distance and related methods.
\newblock \emph{The Annals of Statistics}, 22\penalty0 (2):\penalty0
  1081--1114, 1994.

\bibitem[Mart\'inez et~al.(2009)Mart\'inez, Mota, and del Puerto]{mmp}
R.~Mart\'inez, M.~Mota, and I.~del Puerto.
\newblock On asymptotic posterior normality for controlled branching processes.
\newblock \emph{Statistics}, 43\penalty0 (4):\penalty0 367--378, 2009.

\bibitem[Mohan(2000)]{Mohan-2000}
K.~Mohan.
\newblock Estimation of parameters in controlled branching process.
\newblock \emph{Stochastic Modelling and Applications}, 3:\penalty0 1--15,
  2000.

\bibitem[Pardo(2006)]{Pardo-2006}
L.~Pardo.
\newblock \emph{Statistical {I}nference based on {D}ivergence {M}easures.}
\newblock Chapman $\&$ Hall/CRC, 2006.

\bibitem[Park and Basu(2004)]{Park-Basu-2004}
C.~Park and A.~Basu.
\newblock Minimum disparity estimation: asymptotic normality and breakdown
  point results.
\newblock \emph{Bulletin of Informatics and Cybernetics}, 36:\penalty0 19--33,
  2004.

\bibitem[Rahimov and Al-Sabah(2007)]{ra}
I.~Rahimov and W.~Al-Sabah.
\newblock Controlled branching processes with continuous states.
\newblock \emph{Journal of Applied Probability and Statistics}, 2:\penalty0
  123--137, 2007.

\bibitem[Rao(1961)]{Rao}
C.~Rao.
\newblock Asymptotic efficiency and limiting information.
\newblock \emph{Proc. 4th Berkeley Symp.}, 1:\penalty0 531--546, 1961.

\bibitem[Royden(1988)]{Royden}
H.~L. Royden.
\newblock \emph{Real Analysis.}
\newblock Macmillan Publisher Company., 1988.

\bibitem[Simpson(1987)]{Simpson-1987}
D.~G. Simpson.
\newblock Minimum {H}ellinger distance estimation for the analysis of count
  data.
\newblock \emph{Journal of the American Statistical Association}, 82\penalty0
  (399):\penalty0 802--807, 1987.

\bibitem[Sriram(1994)]{Sriram-94}
T.~N. Sriram.
\newblock Invalidity of bootstrap for critical branching processes with
  immigration.
\newblock \emph{The Annals of Statistics}, 22:\penalty0 565--1114, 1994.

\bibitem[Sriram and Vidyashankar(2000)]{Sriram-2000}
T.~N. Sriram and A.~N. Vidyashankar.
\newblock Minimum {H}ellinger distance estimation for supercritical
  {G}alton--{W}atson processes.
\newblock \emph{Statistics and Probability Letters}, 50:\penalty0 331--342,
  2000.

\bibitem[Sriram et~al.(2007)Sriram, Bhattacharya, Gonz\'{a}lez, Mart\'{i}nez,
  and del Puerto]{Sriram}
T.~N. Sriram, A.~Bhattacharya, M.~Gonz\'{a}lez, R.~Mart\'{i}nez, and I.~del
  Puerto.
\newblock Estimation of the offspring mean in a controlled branching process
  with a random control function.
\newblock \emph{Stochastic Processes and their Applications}, 117:\penalty0
  928--946, 2007.

\bibitem[Stoimenova et~al.(2004)Stoimenova, Atanasov, and Yanev]{asy}
V.~Stoimenova, D.~Atanasov, and N.~Yanev.
\newblock Robust estimation and simulation of branching processes.
\newblock \emph{Comptes rendú de l'Acedémie bulgare des sciences}, 57
  (5):\penalty0 19--22, 2004.

\bibitem[Tukey(1960)]{Tukey}
J.~Tukey.
\newblock \emph{Contributions to probability and statistics}, chapter A survey
  of sampling from contaminated distributions.
\newblock Standford University Press, 1960.

\bibitem[Yanev and Yanev(1996)]{yanevyanev}
G.~P. Yanev and N.~M. Yanev.
\newblock Branching processes with two types of emigration and state-dependent
  immigration.
\newblock \emph{Lecture Notes in Statistics}, 114:\penalty0 216--228, 1996.

\bibitem[Yanev(1975)]{Yanev-75}
N.~M. Yanev.
\newblock Conditions for degeneracy of $\phi$-branching processes with random
  $\phi$.
\newblock \emph{Theory of Probability and its Applications}, 20:\penalty0
  421--428, 1975.

\end{thebibliography}
\end{document}